\def\@noitemerr
\newcommand{\noi}{\noindent}
\newcommand{\is}{$i$'s }
\newcommand{\sq}{\ensuremath{\subseteq}}
\newtheorem{Definition}{Definition}
\newtheorem{Assumption}{Assumption}
\newtheorem{Lemma}{Lemma}
\newtheorem{Example}{Example}
\newtheorem{Remark}{Remark}
\definecolor{darkblue}{rgb}{0.0,0,.6}
\definecolor{maroon}{rgb}{0.68,0,0}
\definecolor{darkgreen}{rgb}{0,0.369,0.086}
\newtheorem*{theorem*}{Theorem}
\newtheorem{proposition}{Proposition}
\begin{document}
\title{\vspace{-1.4cm} Information Aggregation with Costly Information Acquisition\vspace{0.6cm}%
\footnote{\scriptsize We would like to thank Emir Kamenica, Anastasios Karantounias, Ehud Lehrer, Suraj Malladi, Marciano Siniscalchi and participants at Hitotsubashi, Northwestern, DIMACS 2024 Workshop on Forecasting at Rutgers, the Workshop on Information Aggregation in Durham, the 10th LGT Conference in Lancaster, Athens University of Economics and Business, the 7th World Congress of the Game Theory Society in Beijing, the  EC24 Conference on Economics and Computation at Yale, Glasgow, Sussex, the CRETE Conference in Milos, Greece, the MIMA Workshop in Macroeconomic Theory at Warwick, the CEPR Workshop in Turin, the Durham York Workshop in Economic Theory, the  Durham Economic Theory Conference, EWMES2023 in Manchester, and SAET2023 in Paris. This research is funded under ESRC grant ES/V004425/1. 
}}

\author{\large Spyros Galanis\footnote{\scriptsize {Department of Economics, Durham University, Mill Hill Lane, Durham, UK, DH1 3LB. Email: spyros.galanis@durham.ac.uk. }}
\\ 
\and
\author({\large Sergei Mikhalishchev}\footnote{\scriptsize
{Department of Economics, Durham University, Mill Hill Lane, Durham, UK, DH1 3LB. Email: sergei.mikhalishchev@durham.ac.uk.}}}

\date{\small {\vspace{0.05in}\hspace{0.1cm}This draft: \today}}

\maketitle \thispagestyle{empty}

\vspace{-0.65cm}\begin{abstract}
We study information aggregation in a dynamic trading model with partially informed traders. \cite{ostrovsky12} showed that `separable' securities aggregate information in all equilibria, however,  determining whether a security is separable requires knowing the exact information structure of agents. To remedy this problem, we allow traders to acquire signals with cost $\kappa$, in every period. We show that `$\kappa$ separable securities' characterize information aggregation and, as the cost decreases, almost all securities become $\kappa$ separable, irrespective of the traders' initial private information.  Moreover, the switch to $\kappa$ separability happens not gradually but discontinuously, hence even a small decrease in costs can result in a security aggregating information. We provide a complete classification of securities in terms of how well they aggregate information, which surprisingly depends only on their payoff structure. 

 \vspace{0.6cm}

\noindent {\bf JEL}: C91, D82, D83, D84, G14, G41 \vspace{0.2cm}

\noindent {\bf Keywords}: Information Aggregation, Information Acquisition, Financial Markets, Prediction Markets.
\end{abstract}
\pagebreak \pagenumbering{arabic}

\section{Introduction}
The question of whether financial markets reveal and aggregate the private information of traders has been studied at least since \cite{hayek45}. \cite{ostrovsky12} provides a strong result, that information gets aggregated in {\it all Nash equilibria} if the traded securities are {\it separable} and trading takes place for infinitely many periods. A security is a function from states to payoffs. It is non-separable given an information structure if there is a prior under which its payoffs are uncertain, yet all traders agree on its expected value given their private information. If no such prior exists, it is separable. However, determining whether a security is separable requires knowing the exact information structure of agents. 
As a result, a market designer who does not know who participates or what is their information structure, cannot be sure that the equilibrium price is a good predictor of the security's value.

In this paper, we examine whether the ability to acquire costly signals during trading can make markets more efficient at aggregating information, so that the price of a security always converges to its value. This question becomes more relevant as the continuous improvements in information technology have created an abundance of available information, which is now cheaper than ever to acquire, analyze, and act upon.\footnote{For example, recent advances in generative AI tools such as ChatGPT could add considerable value for
investors with information processing constraints \citep{kim2023bloated} and assist in picking stocks \citep{pelster2023can}.}

Our main result, Theorem \ref{thm:info aggregation} with Proposition \ref{prop:different values}, is that for a generic security, which pays differently across all states, and for sufficiently low cost of information acquisition, information will get aggregated, {\it irrespective} of who trades and what is the information structure. In other words, the availability of cheap information makes `most' markets aggregate information under very general conditions.%
\footnote{This result is also supported by empirical evidence. \cite{farboodi2022has} show, using a structural model, that as the value of a firm's data grows, which is equivalent in our model to a decrease in the cost of information acquisition, the information content of the price of the firm's stock increases as well.
}

We use the dynamic trading model of \cite{ostrovsky12} with infinitely many periods and payoffs given by the Market Scoring Rule (MSR) \citep{hanson03,hanson07}. Each Trader's private information is represented by a partition of the state space, and the conjunction of everyone's private information reveals the true value of the security. The model is enhanced by allowing traders to acquire a signal ${\cal R}$ in each period, before trading, at a cost of $cK({\cal R})$,   where  $K$ is a cost function of signals and $c$ is a strictly positive marginal cost. Given $\kappa = (K,c)$, we say that a security is {\it $\kappa$ non-separable} given an information structure if it is non-separable at some prior and no trader finds it myopically optimal to acquire any signal at that prior. Otherwise, it is $\kappa$ separable.


Theorem \ref{thm:info aggregation} shows that $\kappa$ separable securities are necessary and sufficient for information aggregation, when the cost of information is $\kappa$, thus generalising \cite{ostrovsky12}. As the marginal cost $c$ decreases, the securities that eventually become $\kappa$ separable, and therefore aggregate information in {\it all} Nash equilibria and for {\it all} information structures, have a very simple structure: they specify a different payoff at each state (Proposition \ref{prop:different values}).
%

Surprisingly, there is a second, small class of securities, which Proposition \ref{prop:kappa non-separable characterization} characterises, that never become $\kappa$ separable for {\it all} information structures, and therefore may fail information aggregation, even when $c$ converges (but is not equal) to zero.%
\footnote{If the cost of information acquisition is zero, then information aggregates trivially.}
Such a security specifies the same payoff $d$ in two states, while in two other states it pays either higher or lower than $d$.

The third and final class consists of securities which are always separable, irrespective of who trades and what their information structure is. These securities aggregate information even if costly signals are unavailable. Unfortunately, this class is very small and uninformative. Proposition \ref{prop:always separable} shows that it only contains the Arrow-Debreu (A-D) security, which pays $a$ at some state and $b$ in all other states, and the security that specifies three payoffs: the largest is paid in one state of the world, the lowest in another, and the middle in all other states.%
\footnote{These securities are not very informative because the A-D can only predict whether one state has occurred or not, whereas the other security can only predict whether two states have occurred or not. Note that combining more than one A-D security to construct a composite and more informative security will not solve this issue, because it will be non-separable for some information structures.}
For any other security, there is a market (information structure and common prior) at which there is no information aggregation so that the security's price does not converge to its true value. This means that information may not aggregate in `most' markets if signals are too costly or unavailable.

These three classes provide a complete classification of securities in terms of information aggregation, which depends only on their payoff structure. First, the most informative class of $\kappa$ separable securities for some $\kappa$, ensure that information gets aggregated as long as the cost of information acquisition is low. Second, the securities which are $\kappa$ non-separable for all $\kappa$, may fail information aggregation even if the cost is low. Finally,  the always separable securities aggregate information even if costly signals are unavailable.

If the marginal cost of information acquisition gets arbitrarily small, do we even need markets to aggregate information through prices? Each Trader could buy the necessary signals and then bid very close to the true value. We argue that this intuition is incorrect, because markets become even more important in an environment with information acquisition. As the marginal cost $c$ decreases, the security switches discontinuously from $\kappa$ non-separable to $\kappa$ separable, hence even a slight reduction can enable a market to aggregate information in all Nash equilibria. This is in contrast to a poll, defined as the average of the traders' opinions after receiving the information, because its predictive accuracy improves smoothly as costs decrease (see Section \ref{sec: value of the market}). Hence, the availability of cheap information leverages the value of the markets, enabling them to aggregate information long before the cost goes to zero. In fact, Proposition \ref{prop: char k separable and poll} shows that a security is $\kappa$ separable if and only if the market is more accurate than a poll, for all priors.

\subsection{Literature}

%
Our paper contributes to two strands of the literature. The first studies the inefficiency of information acquisition and its effect on information aggregation in markets.%
\footnote{See  \cite{lim2011evolution} for a survey of the empirical literature on market efficiency.}
\cite{pavanEtAl22} show that traders acquire and use information inefficiently. Moreover, as the cost of information declines, traders over-invest
in information acquisition and trade too much on their private information. Several experimental studies
support these results and find that traders tend to over-acquire information. In addition, while information acquisition is positively correlated with market efficiency,  market prices do not aggregate all private information \citep{kraemer2006information, pageSiemroth17, page2021much, corgnet2022security}. \cite{MeleSangiorgi15} analyze costly information acquisition in asset markets with ambiguity-averse traders and show that when uncertainty is high enough, information acquisition decisions become strategic complements and lead to multiple equilibria. Our paper complements and differs from this literature. We find that, as the cost of information acquisition decreases, the number of securities (and therefore markets) that aggregate information increases. However, some securities are never able to aggregate information, even if the cost is almost zero. 


The second strand looks at the information aggregation properties of financial and, in particular, prediction markets.%
\footnote{See \cite{wolfersZitzewitz04} for an early overview of the literature.}
\cite{demarzoSkiadas98, demarzoSkiadas99} first introduced the notion of separable securities. \cite{ostrovsky12} and \cite{chenEtAl12} show that in a market with dynamically consistent traders, separable securities are both necessary and sufficient for information aggregation. \cite{dimitrovSami08} and \cite{chenEtAl10} examine information aggregation by varying the assumptions regarding the traders' information structure. \cite{galanisEtAl24} show theoretically and experimentally that ambiguity aversion can lead to no information aggregation with separable securities, but a new class - strongly separable securities - overcomes this limitation.
\cite{galanisKotronis21} show that information aggregation may fail if boundedly rational traders are unable to update their awareness when confronted with price announcements that they cannot rationalize.%
\footnote{Unawareness and ambiguity aversion generate dynamic inconsistency and negative value of information, which are partly responsible for no information aggregation. See \cite{gal11,gal13} for a model of unawareness and \cite{galanis21} for a connection between dynamic inconsistency and the negative value of information.}
We contribute to this literature by allowing traders to acquire costly signals at every period and propose a new class of $\kappa$ separable securities which characterize information aggregation.


We conclude by motivating our choice of the MSR model. First, in the MSR model, there are no noise traders and no strategic market makers, hence the issue of information aggregation is not intertwined with that of information revelation, as in \cite{kyle85}.%
\footnote{Note that \cite{ostrovsky12} uses both the MSR and the \cite{kyle85} model.}
Unlike the MSR, in \cite{kyle85} it is not always the case that the price will converge to the true value of the security, even if there is only one Trader and therefore information aggregation is achieved by default. Second, a prediction market with the MSR can be reinterpreted as an inventory-based market with a market maker who continuously adjusts the price of the securities depending on the orders she receives.%
\footnote{See \cite{ostrovsky12} and \cite{galanisEtAl24} for examples.}
The advantage of the MSR over more well-known market mechanisms, such as the continuous double auction, is that an agent can make her prediction/trade without waiting for another agent to take the opposite side, or submit a limit order and wait for it to be filled. This feature makes it an attractive mechanism for markets with relatively few participants who do not trade daily, or in markets with automated market makers.%
\footnote{Automated market makers are widely used in Decentralized Finance, see \cite{schlegelEtAl22} for an axiomatization of the logarithmic MSR. \cite{frongilloEtAl24} show the equivalence of prediction markets and constant function market makers that are used overwhelmingly when trading on the blockchain.}
MSR-based prediction markets have been used widely, for example, by firms such as Ford, Google, General Electric, and Chevron (see \cite{ostrovsky12}, \cite{CowgillZitzewitz15}) as well as governments, for example, in the UK and the Czech Republic (\cite{economist21}). 


The paper adheres to the following plan. Section \ref{model} describes the model and Section \ref{main result} presents our main results. In Section \ref{classification of securities} we provide a complete classification of securities. Section \ref{sec: value of the market} compares a poll with a prediction market and shows that a low cost of information acquisition leverages the value of the market. Section \ref{conclusion} concludes.

\section{The Model} 
\label{model}

\subsection{Preliminaries}
\label{preliminaries}
Uncertainty is described by a finite state space $\Omega= \{ \omega_{1},...,\omega_{l} \}$ and the set of traders is denoted $I = \{1, \ldots, n\}$. Trader \is initial private information is represented by partition $\Pi_{i}$ of $\Omega$. Let $\Pi_i(\omega)$ be a partition element of $\Pi_{i}$ that contains $\omega$, so that $\omega \in \Pi_i(\omega) \in \Pi_{i}$. When the true state is $\omega \in \Omega$, Trader $i$ considers all states in $\Pi_i(\omega) \sq \Omega$ to be possible. We assume that the join (the coarsest common refinement) of partitions $\Pi = \{\Pi_1, \ldots \Pi_n\}$ consists of singleton sets so that $\underset{i \in I}{\bigcap} \Pi_i(\omega) = \omega$ for all $\omega \in \Omega$, which means that the traders' pooled information always reveals the true state.%
\footnote{This assumption is also made by \cite{ostrovsky12} and it is without loss of generality because if the conjunction of the traders' private information does not reveal the state, we cannot expect that trading the security will reveal it.}
This implies that, for any two states $\omega_1 \neq \omega_2$, there exists Trader $i$ such that $\Pi_i(\omega_1) \neq \Pi_i(\omega_2)$. Let ${\cal P}$ be the collection of all information structures $\Pi$ where $\Omega$ has at least three states and $\underset{i \in I}{\bigcap} \Pi_i(\omega) = \{\omega\}$ for all $\omega \in \Omega$.  Traders have a full-support common prior $\mu_0$ over $\Omega$ and they are risk-neutral. 

\subsection{Trading environment}
Trading is organized as follows. There are infinitely many periods and finitely many $I$ traders, with $|I| = n$, who trade a security $X: \Omega \rightarrow {\mathbb R}$.  At time $t_0 = 0$, nature selects a state $\omega^* \in \Omega$ and the uninformed market maker makes a prediction $y_0$ about the value of security $X$. At time $t_1>t_0$, Trader 1 makes a revised prediction $y_1$, at $t_2>t_1$ Trader 2 makes his prediction, and so on. At time $t_{n+1} > t_n$, Trader 1 makes another prediction $y_{n+1}$, and the whole process repeats until time $t_{\infty} \equiv \lim_{k \to \infty}t_k =1$. 
All predictions are observed by all traders. Each prediction $y_k$ is required to be within the set $[\underline{y},\overline{y}] = [\underset{\omega\in \Omega}\min X(\omega), \underset{\omega\in \Omega}\max X(\omega)]$.
At some time $t^*>1$ the true value $x^* = X(\omega^*)$ of the security is revealed. 

The traders' payoffs from each announcement are computed using a scoring rule, $s(y,x^*)$, where $x^*$  is the true value of the security and $y$ is a prediction. A scoring rule is {\it proper} if, for any probability measure $\mu$ and any random variable $X$, the expectation of $s$  is maximized at $y=E_\mu[X]$. It is  {\it strictly proper} if $y$ is unique. We focus on continuous strictly proper scoring rules. Examples are the quadratic, where $s(y,x)=-(x-y)^{2}$, and the logarithmic, where $s(y,x)=(x-a)ln(y-a)+(b-x)ln(b-y)$ with $a<\underset{\omega\in \Omega}\min X(\omega), b>\underset{\omega\in \Omega}\max X(\omega)$. 

Under the market scoring rule (MSR) (\cite{mckelveyPage90}, \cite{hanson03, hanson07}), a Trader is paid for each revision he makes. In particular, his payoff, from announcing $y_n$ at $t_n$,   is $s(y_n,x^*)-s(y_{n-1},x^*)$, where  $y_{n-1}$ is the previous announcement and $x^*$ is the true value of the security. For all proper scoring rules, as $E_q[X]$ converges to $X(\omega)$, $s(E_q[X],X(\omega))$ converges to 0. Moreover, if $y_{n-1}$ is further away from $X(\omega)$ than $E_q(X)$ is from $X(\omega)$, then $s(E_q[X],X(\omega))-s(y_{n-1},X(\omega))$ is strictly positive. We then say that the Trader ``buys out'' the previous Trader's prediction. 
If he repeats the previous announcement, his period payoff is zero. 
%
%
We say that prior $\mu$ is non-degenerate given security $X$ if it does not assign probability 1 to a unique value of $X$. 

\subsection{Cost of Information}

We enhance the model of \cite{ostrovsky12} by allowing traders to acquire a costly signal in each time $t$ where they make an announcement. A signal (or statistical experiment) ${\cal R}$ consists of a signal realization space $T_{\cal R} \sq T$ and a family of distributions $\{{\cal R}(\cdot | \omega)\}_{\omega \in \Omega}$ over $T_{\cal R}$. 
Let ${\cal E}$ be the compact set of available signals, in the standard product topology on $\underset{{\cal R} \in {\cal E}}{\bigcup} \Delta (T_{\cal R})^\Omega$.%
\footnote{We use the Levy-Prokhorov metric $d_W$ (\cite{dekelEtAl06}) to generate the topology of weak convergence on $\Delta(T)$. The distance between two signals is defined as $D(R,P)= \underset{\omega \in \Omega}{\max} \; d_W ({\cal R}(\cdot | \omega), {\cal R}'(\cdot | \omega))$.}

Given a signal ${\cal R}$ and a prior $\mu$, each signal realization $\tau \in T_{\cal R}$ generates a posterior $\gamma \in \Delta(\Omega)$. Let ${\cal R}_\mu \in \Delta (\Delta (\Omega))$ be the generated random posterior, so that the probability of posterior $\gamma$ is ${\cal R}_\mu(\gamma)$.







\begin{Definition} \label{def:cost experiments}
The cost of signal ${\cal R}$ is $c K({\cal R})$,
where $c>0$ is the marginal cost and $K: {\cal E} \rightarrow [0, +\infty]$ is a continuous function.
Let ${\cal K}$ be the collection of all cost structures $\kappa = (c, K)$,
where $K$ satisfies the following assumptions:

    \begin{enumerate}
    \item[(i)] If $\kappa = (c, K) \in {\cal E}$ then $\kappa' = (c', K) \in {\cal E}$, for all $c'   >0$,
    \item[(ii)] There is an uninformative signal ${\cal R}_0$, so that ${\cal R}_0(\cdot|\omega) = {\cal R}(\cdot|\omega')$ for all $\omega, \omega' \in \Omega$, with $K({\cal R}_0) = 0$, and $K$ is sufficiently smooth around ${\cal R}_0$,%
    \footnote{That is, $\underset{{\cal R} \rightarrow {\cal R}_0}{\lim \sup} \frac{K({\cal R})}{[D({\cal R}, {\cal R}_0)]^2} < \infty$ for all ${\cal R} \in {\cal E}$.}
    \item[(iii)] If ${\cal R}'$ is a garbling of ${\cal R}$ then $K({\cal R}) \geq K({\cal R}')$.%
            \footnote{Recall that signal ${\cal R}'$ is a garbling of signal ${\cal R}$ if there exists a map $g: T_{\cal R} \rightarrow \Delta (T_{{\cal R}'})$ such that for each $\tau' \in T_{{\cal R}'}$,  ${\cal R}'(\tau'|\omega) = \underset{\tau \in T_{{\cal R}}}{\sum} g(\tau'|\tau) {\cal R}(\tau|\omega)$, for all $\omega \in \Omega$.}


    \end{enumerate}

\end{Definition}

We separately specify the marginal cost so that, for any $K$, we can express a sufficiently low cost of information acquisition with $c \rightarrow 0$, and (i) specifies that all marginal costs are available.
Condition (ii)  says that an uninformative signal is available and free, and $K$ is sufficiently smooth around zero information. Finally, (iii) says that more informative signals are costlier. 




For the second part of Theorem \ref{thm:info aggregation} and Proposition \ref{prop: char k separable and poll}, we assume that if a Trader can generate a random posterior $Q$ with a sequence of signals, it is weakly less costly to generate it with just one signal. This assumption is similar to the Sequential Learning-Proof property in \cite{bloedelZhong24}. In particular, consider a prior $\mu$, a signal ${\cal R}$, and a collection of  signals $\{{\cal R}^\tau\}_{\tau \in {T_{\cal R}}}$, one for each signal realization $\tau \in T_{\cal R}$. Let $Q = \{{\cal R} \otimes \{{\cal R}^\tau\}_{\tau \in T_{\cal R}}\}_\mu$ be the random posterior generated by the combination of these signals, according to the following procedure. First, the Trader conducts signal ${\cal R}$. After receiving each signal realization $\tau \in T_{\cal R}$ and updating his posterior to $\gamma^\tau$, he conducts signal ${\cal R}^\tau$. After receiving a new signal realization  $\tau' \in T_{{\cal R}^\tau}$, he updates his posterior to $\gamma^{\tau'}_{{\cal R}^\tau}$. The ex ante cost of the generated random posterior is $C = c \left (K({\cal R}) + \underset{\tau \in T_{\cal R}}{\sum} {\cal R}_\mu (\gamma^\tau) K({\cal R}^\tau) \right )$. We assume that the same random posterior $Q = \{{\cal R} \otimes \{{\cal R}^\tau\}_{\tau \in T_{\cal R}}\}_\mu$ can be generated by just one signal ${\cal R}'$, which costs weakly less than $C$.

\begin{Assumption} \label{ass: costly sequence of signals}
    For any prior $\mu$ and signals ${\cal R}$, $\{{\cal R}^\tau\}_{\tau \in T_{\cal R}}$, there is signal ${\cal R}'$ such that $\{{\cal R} \otimes \{{\cal R}^\tau\}_{\tau \in T_{\cal R}}\}_\mu = {\cal R'}_\mu$ and  $K({\cal R}) + \underset{\tau \in T_{\cal R}}{\sum} {\cal R}_\mu (\gamma^\tau) K({\cal R}^\tau) \geq K({\cal R}')$. 
\end{Assumption}



\subsection{The Game}

A game $\Gamma(\Omega, \Pi, X, \mu, {\cal E}, c, K, y_0, s, \beta)$ is a tuple where $\Omega$ is the state space, $\Pi$ is the collection of partitions which describes the initial private information of traders, $X$ is the traded security, $\mu$ is the common prior, ${\cal E}$ is the collection of signals, $c$ is the marginal cost and $K$ is the cost function of signals,  $y_{0}$ is the market maker's initial announcement at time $t_0$, $s$ is a strictly proper scoring rule, and $\beta$ is the common discount rate. The range of possible announcements is $[\underline{y},\overline{y}] = [\underset{\omega \in \Omega}{\min}X(\omega), \underset{\omega \in \Omega}{\max}X(\omega)]$.

At time $t_k$, one Trader chooses a signal and specifies an announcement in $[\underline{y},\overline{y}]$, for each possible signal realization. Then, nature moves and picks a signal realization. Given the Trader's specification, the announcement $y_k \in [\underline{y},\overline{y}]$ corresponding to that signal realization is made. The announcement $y_k$ becomes common knowledge but the choice of the signal and the signal realization are the private information of the Trader who announced.

Following  \cite{dimitrovSami08}, the payoff of Trader $i$  who makes an announcement $y_k$ at time $t_k$ is $\beta^k \left (s(y_{t_k},x^*)-s(y_{t_{k-1}},x^*) - cK({\cal R})\right)$, where $0 < \beta \leq 1$ is the common discount factor and ${\cal R} \in {\cal E}$ is the signal that he acquired.%
\footnote{For some examples, we allow myopic traders with $\beta=0$, where each Trader makes the myopically best announcement in each period.}
Both the choice of signal ${\cal R}$ and the signal realization is private and not revealed to other traders. The total payoff of each Trader is the sum of all payments for revisions, minus the cost of the signals he acquired. We also consider the special case of myopic, or non-strategic, traders, so that when making an announcement they do not take into consideration the impact of the other traders' strategies on their future payoffs. Note that each Trader can guarantee a zero payoff at any round, by repeating the previous announcement and not buying any signal.

Let $(y_1, \ldots, y_k) \in [\underline{y}, \overline{y}]^k$ be a history of announcements 
and $(\tau_1, \ldots, \tau_k) \in (\underset{{\cal R} \in {\cal E}}{\bigcup} T_{\cal R})^k$ be a history of signal realisations, up to time $t_k$. Denote by $T^k_i$ the collection of all histories about $i$'s signal realisations up to time $t_k$. For simplicity, we assume  that a signal realisation $\tau \in T_{\cal R}$ is unique to a specific signal, therefore a history of realisations in $T^k_i$ also identifies the chosen signals.



Let $[\underline{y}, \overline{y}]^{T_{\cal R}}$ be the set of all functions from signal realizations of ${\cal R}$ to announcements in $[\underline{y}, \overline{y}]$.
Trader \is mixed strategy at time $t_k$ is a measurable function 



\[\sigma_{i,k}:\Pi_i \times T^{k-1}_i \times [\underline{y}, \overline{y}]^{k-1} \times [0,1] \longrightarrow \underset{{\cal R} \in {\cal E}}{\bigcup}[\underline{y}, \overline{y}]^{T_{\cal R}}.\]

It specifies a signal ${\cal R}$ and an announcement $y_k$ for each realization that is drawn from that signal, given the element of his partition at the true state $\omega$, the history of his past chosen signal realizations, the history of all announcements up to time $t_{k-1}$, and the realization of random variable $\iota_{k}\in [0,1]$, which is drawn from the uniform distribution. One such draw takes place at each time $t_k$, it is observed only by the Trader who makes the announcement at $t_k$, and the draws are independent of each other and of the true state $\omega$.%
\footnote{This formulation, with draws on $[0,1]$ denoting the Trader's randomization, is taken from \cite{ostrovsky12}.}
The draws of the uniform distribution allow each Trader to randomize both in terms of the chosen signals and the announcements given the signal realisations. Note that, although all announcements are public, the history of past signals and realisations for each Trader $i$ is private.


The full state $\phi = (\omega, \iota_1, \iota_2, \ldots)$ resolves the initial uncertainty about $\Omega$ and the randomizations of the players according to $\iota$. Let $X(\phi) \equiv X(\omega)$ be the true value of $X$ at $\phi = (\omega, \iota_1, \iota_2, \ldots)$ and $\Phi = \Omega \times [0,1]^{\mathbb N}$ be the full state space. Denote by $\sigma_i$ the collection $\sigma_{i,k}$ of \is strategies, at all times $t_k$ where it is his turn to make an announcement. Let $\sigma = (\sigma_1, \ldots, \sigma_n)$ be a profile of strategies and $\Sigma$ be the set of all strategy profiles. Given a strategy $\sigma$ and state $\phi$, let  $y_{i+nk}(\sigma,\phi)$ be the announcement of Trader $i$ and $cK({\cal R}_{i+nk}(\sigma,\phi))$ the cost of acquiring signal ${\cal R}_{i+nk}(\sigma,\phi)$ at time $t_{i+nk}$.

\begin{Definition}\label{nash equilibrium}
A strategy profile $\sigma$ is a Nash equilibrium if, for every Trader $i$ and every alternative strategy $\sigma'= (\sigma_{-i},\sigma'_i)$, we have
\[E_\mu\Bigg[\sum_{k=0}^{\infty}\beta^{i+n k}\bigg(s \Big (y_{i+nk} (\sigma, \phi), X(\phi) \Big) - s \Big (y_{i+ nk-1} (\sigma, \phi), X(\phi) \Big) -cK \Big ({\cal R}_{i+nk}(\sigma,\phi) \Big) \bigg) \Bigg]\geq \]
\[E_\mu\Bigg[\sum_{k=0}^{\infty}\beta^{i+n k}\bigg(s \Big (y_{i+nk} (\sigma', \phi), X(\phi) \Big) - s \Big (y_{i+ nk-1} (\sigma', \phi), X(\phi) \Big) -cK \Big({\cal R}_{i+nk}(\sigma',\phi) \Big) \bigg) \Bigg],\]
where the expectation is taken with respect to the common prior $\mu$.
\end{Definition}


\subsection{Information Aggregation}
We say that information is aggregated if the traders' predictions converge to the intrinsic value $X(\phi)$ of security $X$, for all $\phi \in \Phi$. For every $\phi \in \Phi$ and strategy profile $\sigma$, let $y_k(\sigma, \phi)$ be the announcement of the Trader who moves in period $t_k$. The announcement $y_k(\sigma, \phi)$ depends on $\phi = (\omega, \iota_1, \iota_2, \ldots)$ because traders have different private information across $\Omega$ and randomize based on the realizations of random variable $\iota$. Because $\{y_k\}_{k=1}^{\infty}$  is a sequence of random variables, we need a probabilistic version of convergence.


\begin{Definition} \label{def info agg}
Under a profile of strategies in $\Gamma$, information aggregates if sequence $\{y_k\}_{k=1}^{\infty}$ converges in probability to random variable $X$.
\end{Definition}

\subsection{Separable Securities}
Consider the following example, which illustrates the notion of separability.

\begin{Example} \label{main example}
The state space is $\Omega = \{\omega_1, \omega_2, \omega_3, \omega_4\}$, the security is $X= (0,1,2,3)$, and there are two myopic traders with common prior $\mu=(\frac{1}{8},\frac{3}{8},\frac{3}{8},\frac{1}{8})$. Trader 1's partition is $\{\{\omega_1, \omega_3\}, \{\omega_2, \omega_4\}\}$ and Trader 2's is $\{\{\omega_1, \omega_4\}, \{\omega_2, \omega_3\}\}$.
\end{Example}

As traders are myopic in this example, under the MSR they take turns in announcing their expected value of $X$. Suppose that the true state is $\omega_1$. In period 1, Trader 1 with conditional beliefs $(\frac{1}{4}, 0, \frac{3}{4}, 0)$ will announce $\frac{3}{2}$. The same announcement would be made by Trader 1 in all states, and thus no information is transmitted to Trader 2. In period 2, Trader 2 also makes the same announcement at $\omega_1$, and the same announcement would be made in all states. Hence, no information is transmitted to Trader 1. Because the two traders agree on the announcement, there is no information updating, and the process ends. We say that there is no information aggregation at $\omega_1$ because the final announcement $\frac{3}{2}$ is not equal to the intrinsic value of $X$ at $\omega_1$, which is 0.%
\footnote{Information aggregation fails in the first round in this example and no-one updates from $\mu$. In general, traders could start from a different common prior $\mu'$, and after several rounds of updating they could update to some other posterior $\mu''$, at which there is no information aggregation.}

At all states, it is common knowledge that both traders agree that the expected value of $X$ is $\frac{3}{2}$, hence there can be no more learning from further (myopic) announcements.
However, it is also common knowledge that the intrinsic value of $X$ is not $\frac{3}{2}$; it is either 0, 1, 2 or 3, which implies that there is no information aggregation. When both these conditions are satisfied for some prior and some partitions, the security is non-separable. If there is no such prior, the security is separable.

\begin{Definition}
\label{separable}
A security $X$ is called non-separable under information structure $\Pi$ if there exists probability distribution $\mu$ and value $v\in \mathbb{R}$ such that:
\begin{itemize}
\item[$(i)$] $X(\omega) \neq v$ for some $\omega \in Supp(\mu)$,
\item[$(ii)$] $E_{\mu}[X|\Pi_{i}(\omega)]=v$ for all $i=1,...,n$ and $\omega \in Supp(\mu)$. 
\end{itemize}
We then say that security $X$ is non-separable at $\mu$. Otherwise, it is called separable.
\end{Definition}

A security $X$ is non-separable if, for some prior $\mu$ and at all states in its support, all traders' expected value of $X$ is $v$, yet there is uncertainty about the value of $X$. If for all priors at least one condition is violated, then the security is separable. Note that separability (and non-separability) is a property that depends on the information structure $\Pi$. For a different $\Pi$, a security may switch from separable to non-separable and vice versa. Theorem 1 in \cite{ostrovsky12} shows that separability characterizes information aggregation, in an environment without information acquisition.

\subsection{Information Acquisition}

To illustrate how information gets aggregated, even when the security is non-separable, we enhance Example \ref{main example} by allowing traders to acquire extra information about the value of the security through noisy signals, before making their announcement. Suppose that Trader 1 can acquire signal ${\cal R}$ that generates a signal realization $z$, given each state, with probabilities ${\cal R}(z | \omega_1)=1$ and ${\cal R}(z | \omega_2)={\cal R}(z | \omega_3)={\cal R}(z | \omega_4)=0.5$.

At state $\omega_1$, Trader 1's prior belief is $(\frac{1}{4}, 0, \frac{3}{4}, 0)$  and, after receiving the signal realization $z$, his posterior belief is $(\frac{2}{5},0, \frac{3}{5}, 0)$. Then, he announces the expected value of $X$, which is $\frac{6}{5}$. Trader 2 considers states $\omega_1$ and $\omega_4$ to be possible. The public announcement of $\frac{6}{5}$ reveals to Trader 2 that the true state is $\omega_1$. The reason is that, regardless of whether Trader 1 received a signal $z$ or not, his posteriors at $\omega_4$ are $(0, \frac{3}{4},0,\frac{1}{4})$ and he would have announced $\frac{3}{2}$. As a result, Trader 2 announces 0 in the second round, and the game ends. Note that the final price is equal to the intrinsic value of $X$ at $\omega_1$, hence information gets aggregated. In summary, the ability to acquire extra signals transforms $X$ from non-separable to $\kappa$ separable, enabling information aggregation.

We make two observations. First, we have abstracted from the cost of acquiring an experiment. Each Trader will acquire the signal only if the expected continuation value from making a better prediction outweighs the cost. 
Second, Trader 2 free-rides on Trader 1 buying the signal. By moving the price from $\frac{6}{5}$ to $0$, he books a profit, without paying the cost of a signal. 

\subsection{$\kappa$ Separable Securities}

Example \ref{main example} showed that even if a security is non-separable at some prior $\mu$, as long as traders can buy costly signals they can further move the announcement and achieve information aggregation. Hence, information aggregation may get `stuck' only if no-one is willing to acquire any new signals. We say that there is no (myopic) information acquisition if no Trader is willing to acquire any signal, given the cost structure $\kappa = (K,c)$.%
\footnote{It is important to emphasize that even though we use a `myopic' argument to define no information acquisition, traders are strategic in our setting.}


\begin{Definition}
There is no information acquisition for security $X$, given prior $\mu$, cost structure $\kappa = (K,c)$  and information structure $\Pi \in {\cal P}$, if for all states $\omega \in Supp(\mu)$,  all traders $i$, and all signals ${\cal R}$,
\begin{equation*}
\underset{\gamma \in Supp({\cal R}_{\mu_{\Pi_i(\omega)}})}{\sum} {\cal R}_{\mu_{\Pi_i(\omega)}}(\gamma) \underset{\omega' \in \Omega}{\sum} \gamma(\omega') \Big [s(E_{\gamma}[X],X(\omega'))-s(E_{\mu_{\Pi_i(\omega)}}[X],X(\omega')) \Big ]-cK({\cal R})\leq 0.
\end{equation*}
Otherwise, there is information acquisition.
\end{Definition}

To explain this inequality, suppose that at time $t$ it is Trader \is turn to make an announcement. Having observed all previous announcements and using the public information that is revealed, an outside observer updates the common prior $\mu_0$ to a belief $\mu$ over $\Omega$. If the true state is $\omega$, then Trader \is private information is $\Pi_i(\omega)$ and his posterior belief is the Bayesian update of $\mu$, denoted $\mu_{\Pi_i(\omega)}$. In other words, he updates using both his private information and the public information revealed by previous announcements.

His myopic problem at $\omega$ consists of buying an experiment ${\cal R}$, at cost $cK({\cal R})$, which generates a random posterior ${\cal R}_{\mu_{\Pi_i(\omega)}}$.  For each posterior belief $\gamma \in Supp({\cal R}_{\mu_{\Pi_i(\omega)}})$, his myopic best response is to announce $E_\gamma [X]$ because the scoring rule is proper. If his expected utility is weakly negative for all signals ${\cal R}$, then we say that he does not want to acquire any information, because buying no signal and repeating the previous announcement $E_{\mu_{\Pi_i(\omega)}}[X]$ guarantees a payoff of zero.%
\footnote{Note that, as in \cite{ostrovsky12}, the previous announcement does not influence the choice of the myopically optimal choice of the announcement (and the signal). We have specified $E_{\mu_{\Pi_i(\omega)}}[X]$ to be the previous announcement so that we normalize at zero the utility from no information acquisition, because then the optimal myopic announcement is $E_{\mu_{\Pi_i(\omega)}}[X]$. Interestingly,  the myopic best depends on the previous announcement when traders are ambiguity averse, as shown in \cite{galanisEtAl24}.}

We say that security $X$ is $\kappa$ non-separable given information structure $\Pi$ and cost structure $\kappa$ if there is a prior $\mu$ such that there is no myopic information acquisition and the security is non-separable at $\mu$.

\begin{Definition}
Security $X$ is $\kappa$ non-separable given an information structure $\Pi \in {\cal P}$ and cost structure  $\kappa \in {\cal K}$, if there exists prior $\mu$ such that
\begin{itemize}
\item Security $X$ is non-separable at $\mu$,
\item There is no information acquisition for $X$ given $\mu$.

\end{itemize}
Otherwise, security X is $\kappa$ separable.
\end{Definition}

\section{Main Result}
\label{main result}

Our main result consists of two parts, the first bullet point of Theorem \ref{thm:info aggregation} and Proposition \ref{prop:different values}. Collectively, they specify that for a security which pays differently across all states, and for sufficiently low cost of information acquisition, information will get aggregated, {\it irrespective} of who trades and what is the information structure $\Pi$.

The first part, Theorem \ref{thm:info aggregation}, shows that $\kappa$ separable securities characterize information aggregation when the cost structure is $\kappa = (K,c)$. 

\begin{theoremEnd}{theorem}\label{thm:info aggregation}
    Fix information structure $\Pi$ and cost structure $\kappa = (K,c)$. Then:
    \begin{itemize}
        \item If security $X$ is $\kappa$ separable under $\Pi$, then information gets aggregated in any Nash equilibrium of  game $\Gamma(\Omega, \Pi, X, \mu, {\cal E}, c, K, y_0, s, \beta)$, 
        \item If security $X$ is $\kappa$ non-separable under $\Pi$, then there exists prior $\mu$ such that for all $s$, $y_0$, and $\beta$,  there exists a Nash equilibrium of the corresponding game $\Gamma$ in which information does not get aggregated.
    \end{itemize}
\end{theoremEnd}

\begin{proofEnd}
    
We follow the proof of Theorem 1 in \cite{ostrovsky12}, which proceeds in four steps. In the first step, we show that there is a uniform lower bound on the expected profits that at least one Trader can make by improving the forecast.


Let $r$ be a distribution over $\Omega$ and $z$ be the previous announcement. Following \cite{ostrovsky12}, we define the instant opportunity of Trader $i$ as the highest expected payoff that he can achieve by changing the forecast from $z$, if the state is drawn according to $r$. The difference from \cite{ostrovsky12} is that we allow the Trader to acquire information before making an announcement. Let ${\cal R}_\omega$ be the optimal signal that Trader $i$ acquires at $\Pi_i(\omega)$, given his beliefs $r_{\Pi_i(\omega)}$, and $Q_\omega$ the resulting random posterior. Trader \is instant opportunity is
\[\underset{\omega \in \Omega}{\sum} r(\omega) \left ( \underset{\gamma \in Supp(Q_\omega)}{\sum} Q_\omega(\gamma) \underset{\omega' \in \Omega}{\sum} \gamma(\omega') \Big [s(E_{\gamma}[X],X(\omega'))-s(z,X(\omega')) \Big ]-cK({\cal R}_\omega) \right ).\]

Let $\Delta$ be the set of probability distributions $r \in \Delta(\Omega)$ for which there is uncertainty about security $X$, so that there are states $a,b$ with $r(a)>0$, $r(b)>0$ and $X(a) \neq X(b)$.

\begin{Lemma} \label{lemma 1 ostrovsky}
If security $X$ is $\kappa$ separable, then for every $r \in \Delta$, there exist $\chi > 0$ and Trader $i$ such that, for every $z \in {\mathbb R}$, the instant opportunity of Trader $i$ given $r$ and $z$ is greater than $\chi$. 
\end{Lemma}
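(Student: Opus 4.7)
My approach is to extract the $z$-dependence of the instant opportunity in closed form. Using Bayes consistency $\sum_\gamma Q_\omega(\gamma)\gamma(\omega') = r(\omega' \mid \Pi_i(\omega))$ together with the tower property, the terms in the instant opportunity that involve $s(z,X(\omega'))$ collapse, for every choice of $Q = (Q_\omega)$, to $E_r[s(z,X)]$. Hence the instant opportunity of Trader $i$ at $(r,z)$ equals
\[B_i(r)\;-\;E_r[s(z,X)],\qquad B_i(r):=\sup_{Q}\sum_\omega r(\omega)\left(\sum_\gamma Q_\omega(\gamma)E_\gamma[s(E_\gamma[X],X)] - cK(r_{\Pi_i(\omega)},Q_\omega)\right),\]
with $B_i(r)$ independent of $z$. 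Since $s$ is strictly proper, $E_r[s(z,X)]\leq E_r[s(E_r[X],X)]$ for every $z$, so it suffices to exhibit a trader $i$ with $B_i(r)>E_r[s(E_r[X],X)]$; then $\chi:=B_i(r)-E_r[s(E_r[X],X)]>0$ bounds the instant opportunity uniformly in $z$.

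To exhibit such a trader I split on $\kappa$-separability at $r$. If $X$ is separable at $r$, then since $r\in\Delta$ the value $E_r[X]$ would witness non-separability if every trader's conditional expectation were constant across $\mathrm{Supp}(r)$; so some Trader $i$ has a non-constant $v_i(\omega):=E_r[X\mid\Pi_i(\omega)]$ over $\mathrm{Supp}(r)$. Plugging the no-information experiment into $B_i(r)$ gives $B_i(r)\geq\sum_\omega r(\omega)s(v_i(\omega),X(\omega))$; applying strict properness on each cell $\pi\in\Pi_i$ (where $v_i(\pi)$ is the conditional mean of $X$ and hence the unique maximizer of $E_{r(\cdot|\pi)}[s(\cdot,X)]$) produces the required strict inequality $B_i(r)>E_r[s(E_r[X],X)]$, the strictness being delivered by any cell with $v_i(\pi)\neq E_r[X]$ (which exists since $v_i$ is non-constant and $E_r[v_i]=E_r[X]$). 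If instead $X$ is non-separable at $r$, then by $\kappa$-separability some trader $i$ strictly prefers to acquire information rather than repeat the consensus $v$; non-separability forces $v=E_r[X\mid\Pi_i(\omega)]=E_r[X]$, and via the decomposition above this preference is precisely $B_i(r)>E_r[s(v,X)]=E_r[s(E_r[X],X)]$.

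The main obstacle I anticipate is technical rather than conceptual: one must justify the Bayes-consistency collapse of the $s(z,X(\omega'))$ terms for experiments whose posteriors have general (Borel) supports and possibly infinite $K$-cost, and confirm that the no-information experiment sits inside the admissible class at zero cost so that it is a legitimate feasible choice inside the supremum defining $B_i(r)$. Once these pieces are in place, the proof reduces to a clean strict-properness argument built on top of the decomposition, and no further structural assumptions (in particular no bound on unbounded experiments beyond what is already in the definition of $\kappa$) are needed.
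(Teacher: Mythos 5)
Your proof is correct, but it takes a genuinely different route from the paper's. The paper handles the case where $X$ is separable at $r$ by citing Ostrovsky's Lemma 1 verbatim (the no-information option makes the instant opportunity with acquisition at least as large as without), and handles the non-separable case by splitting on the previous announcement $z$: for $z$ in a small interval around the consensus $v$ it invokes continuity of the acquisition payoff to get a bound $\chi_1$, for $z$ outside that interval it invokes order-sensitivity of proper scoring rules to get a bound $\chi_2$, and then takes $\chi = r(\Pi_i(\omega))\min\{\chi_1,\chi_2\}$. You instead push the paper's own earlier observation --- that $z$ drops out of the information-choice problem --- one step further: Bayes consistency collapses all the $s(z,X(\omega'))$ terms to $E_r[s(z,X)]$, so the instant opportunity is exactly $B_i(r)-E_r[s(z,X)]$, and properness identifies $z=E_r[X]$ as the worst case uniformly over $z$. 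This reduces both cases to the single strict inequality $B_i(r)>E_r[s(E_r[X],X)]$, which you obtain from strict properness on a cell with $v_i(\pi)\neq E_r[X]$ (separable case, where your non-constancy argument correctly uses $E_r[v_i]=E_r[X]$ to rule out distinct per-trader constants) and from the definition of $\kappa$ separability (non-separable case, where $v=E_r[X]$ by the tower property). What your approach buys is an explicit, constructive $\chi$ and the elimination of the continuity/compactness step over $z$; what the paper's approach buys is a shorter argument in the separable case by reuse of Ostrovsky's lemma. The technical caveats you flag (measurability of the collapse, feasibility of the null experiment at zero cost) are indeed the only loose ends, and both are immediate from the paper's definitions of Bayes-plausible random posteriors and of the cost structure (point (ii) of Definition \ref{def:cost experiments}).
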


\begin{proof}

Fix $r \in \Delta$. There are two cases. First, $X$ is separable with respect to $r$. Then, the proof of Lemma 1 in \cite{ostrovsky12} applies and we have the result. Second, $X$ is non-separable at $r$ and value $v$, where $v = E_r[X|\Pi_i(\omega)]$ for all $\omega \in Supp(r)$ and $i \in I$. Because $X$ is $\kappa$ separable, there is information acquisition. This implies that for some $\omega \in Supp(r)$, Trader $i$ receives a strictly positive payoff by acquiring information and changing the previous announcement $v = E_r[X|\Pi_i(\omega)]$. Note that the new announcement is not deterministic but depends on the optimal random posterior $Q$, so he announces $E_\gamma[X]$ with probability $Q(\gamma)$. From continuity, there is a small enough $\epsilon > 0$, so that for all previous announcements $z \in [v-\epsilon, v+\epsilon]$, Trader $i$ receives a strictly positive payoff of at least $\chi_1 > 0$ by acquiring information and changing the announcement. If $z \notin [v-\epsilon, v+\epsilon]$, then Trader $i$ receives a strictly positive payoff of at least $\chi_2>0$, by not acquiring any information and announcing the myopically best $E_r[X|\Pi_i(\omega)]=v$.%
\footnote{ \label{fn: ostrovsky p2618} This is true because a proper scoring rule is `order-sensitive' so that the further away the previous announcement is from the true expected value $E_r[X|\Pi_i(\omega)]=v$, the higher the payoff is for Trader $i$. The lowest payoff from the myopically best announcement is 0 and it is achieved when it is equal to the previous announcement (see p. 2618 in \cite{ostrovsky12}).}
By setting $\chi_3 = \min\{\chi_1, \chi_2\}$, we have that at $\Pi_i(\omega)$, Trader $i$ receives a strictly positive payoff of at least $\chi_3$, for all previous announcements $z$. For any $\omega' \in Supp(r) \setminus \Pi_i(\omega)$, Trader $i$ can repeat $v$ and receive 0. Hence, the instant opportunity of $i$ given $r$, which is the ex-ante expectation over all $\omega \in Supp(r)$, is at least $\chi = r(\Pi_i(\omega)) \chi_3 >0$ for all previous announcements.

\end{proof}

Steps 2-3 are identical to the proof of \cite{ostrovsky12} and establish that there is Trader $i$ and lower bound $\nu^*>0$ such that, for infinitely many periods $t_{nk +i}$,  the expected instant opportunity of Trader $i$ is greater than $\nu^*$. The reason they are identical is that, once we fix an equilibrium strategy, the only difference from \cite{ostrovsky12} is that traders buy an extra, payoff-irrelevant, signal in each period where they make an announcement. 

In Step 4, we show that the presence of a ``non-vanishing arbitrage opportunity'' is impossible in equilibrium. The proof is very similar to that of \cite{ostrovsky12}, but we write it here for completeness.

Let $\overline{s}_k$ be the expected score of prediction $y_k$, where the expectation is over all $\phi$ and the moves of players according to the mixed equilibrium. The expected payoff to the Trader who moves in period $t_k$ is $\beta^k(\overline{s}_k-\overline{s}_{k-1} - \overline{c}_k)$, where $\overline{c}_k \geq 0$ is the expected cost of information acquisition. It is zero if and only if there is no information acquisition.

Take any period $t_k$. Let $\Psi_k$ be the sum of all players' expected continuation payoffs at $t_k$, divided by $\beta^k$:
\[\Psi_k = (\overline{s}_k-\overline{s}_{k-1} - \overline{c}_k)+\beta(\overline{s}_{k+1}-\overline{s}_{k} - \overline{c}_{k+1})+ \beta^2(\overline{s}_{k+2}-\overline{s}_{k+1} - \overline{c}_{k+2}) + \ldots\]

Note that $\Psi_k$ is weakly positive, because each Trader can guarantee a zero payoff by repeating the previous announcement and not buying any costly information. Additionally, it is strictly positive if \is expected instant opportunity is strictly positive and it is \is turn to make an announcement. That is, with some probability, some history $H^{k-1}$ occurs and \is instant opportunity is strictly positive.

Consider now $\underset{K \rightarrow \infty}{\lim}\underset{k = 1}{\overset{K}{\sum}} \Psi_k$. On the one hand, this limit must be infinite, because each $\Psi_k$ is non-negative and an infinite number of them are greater than $\nu^*>0$, from Step 3. On the other hand, for any $K$, we have
\begin{align*}
\underset{k = 1}{\overset{K}{\sum}} \Psi_k &= (\overline{s}_{1}-\overline{s}_{0}-\overline{c}_{1}) + \beta(\overline{s}_{2}-\overline{s}_{1}-\overline{c}_{2}) +\beta^2(\overline{s}_{3}-\overline{s}_{2}-\overline{c}_{3})+ \ldots \\
       &+(\overline{s}_{2}-\overline{s}_{1}-\overline{c}_{2})+ \beta(\overline{s}_{3}-\overline{s}_{2}-\overline{c}_{3}) +\beta^2(\overline{s}_{4}-\overline{s}_{3}-\overline{c}_{4})+ \ldots\\
       &+ \;\;\;\;\;\; \vdots\\
       &+(\overline{s}_{K}-\overline{s}_{K-1}-\overline{c}_{K})+ \beta(\overline{s}_{K+1}-\overline{s}_{K}-\overline{c}_{K+1}) +\beta^2(\overline{s}_{K+2}-\overline{s}_{K+1}-\overline{c}_{K+2})+ \ldots\\
       &=(\overline{s}_{K}-\overline{s}_{0}-\underset{k=1}{\overset{K}{\sum}}\overline{c}_{k})+ \beta(\overline{s}_{K+1}-\overline{s}_{1}-\underset{k=2}{\overset{K+1}{\sum}}\overline{c}_{k}) +\beta^2(\overline{s}_{K+2}-\overline{s}_{2}-\underset{k=3}{\overset{K+2}{\sum}}\overline{c}_{k})+ \ldots\\
       &\leq (\overline{s}_{K}-\overline{s}_{0})+ \beta(\overline{s}_{K+1} -\overline{s}_{1}) +\beta^2(\overline{s}_{K+2}-\overline{s}_{2})+ \ldots\\
       &\leq 2M/(1-\beta),
\end{align*}
where costs drop out as they are negative and $M=\underset{y\in [\underline{y},\overline{y}], \omega \in \Omega}{max}|s(y,X(\omega))|$.  But this is impossible, hence $y_k$ must converge in probability  to the true value of $X$.



For the second part, suppose that $X$ is $\kappa$ non-separable. Then, there exist $\mu$ and $v$ at which there is no information acquisition and $X$ is non-separable at $\mu$. Consider the following strategy profile. In every period and after every history, each Trader repeats the previous announcement. The initial announcement by the market maker is $v$.%
\footnote{The initial announcement by the market maker is irrelevant, as long as Trader 1's first announcement is $v$ and then all other traders repeat the previous announcement.}

We will show that this strategy profile is a Nash equilibrium. Note that everyone's expected utility is 0 from following the Nash equilibrium strategy. Suppose that Trader $i$ has a profitable deviation, hence his expected utility, the sum from all the periods where he is making an announcement, is strictly positive. From the definition of $\kappa$ non-separability, the expected utility of Trader $i$ at the first time $t_1$ where he deviates is weakly negative, even if he chooses the optimal experiment and makes the myopic best announcement. Because the sum over all periods is strictly positive, there is time $t_{k'}$ where the expected utility at $t_{k'}$ from deviating is strictly positive. 

We say that Trader $i$ moves the announcement at time $t_k$ if he does not repeat the announcement that Trader $i-1$ made at $t_{k-1}$. Suppose that at time $t_{k'}$ he moves the announcement and let $t_{k}$ be the previous time where he also moved the announcement (hence $t_{k'}$ cannot be the first time he deviates). We will show that \is total expected utility strictly increases if, instead of moving at $t_k$ and at $t_{k'}$, he combines the two announcements and only moves at $t_{k}$. Intuitively, if at $t_k$ he moves the announcement from $v$ to $v'$, and at $t_{k'}$ he moves it from $v'$ to $v''$, he will be strictly better off if he only moves the announcement from $v$ to $v''$ at $t_k$. The reason is that the strictly positive expected utility at $t_{k'}$ is moved earlier at $t_k$, hence it is discounted with $\beta^{t_k}$, which is larger than $\beta^{t_{k'}}$. Recall that all other traders always repeat the previous announcement.

At time $t_k$, suppose that the previous announcement is $y_{t_{k-1}}$, Trader $i$ has initial belief $\mu_k$ and purchases experiment ${\cal R}_{\mu_k}$ (for simplicity, we also denote the resulting random posterior as ${\cal R}_{\mu_k}$).%
\footnote{Note that $\mu_k$ is the belief at $t_k$ after receiving a signal realization at the immediately previous time when he bought an experiment. If he has never bought any experiment before, it is just $\mu$.}
After each signal realization and corresponding posterior $\gamma_k \in Supp({\cal R}_{\mu_k})$, he makes an announcement $y_{\gamma_k}$. His expected utility at $t_k$, given belief $\mu_k$, is 


\begin{equation}
\label{eq t_{k}}
\beta^{t_k} \mu_k\Bigg (\underset{\gamma_k \in Supp({\cal R}_{\mu_{k}})}{\sum} {\cal R}_{\mu_{k}}(\gamma_k) 
E_{\gamma_k} \Big [s(y_{\gamma_k})-s(y_{t_{k-1}})\Big ] -cK({\cal R}_{\mu_k}) \Bigg ),
\end{equation}
where we write $\underset{\omega' \in \Omega}{\sum} \gamma_k(\omega') \Big [s(y_{\gamma_k},X(\omega'))-s(y_{t_{k-1}},X(\omega')) \Big ]$ as $E_{\gamma_k} \Big [s(y_{\gamma_k})-s(y_{t_{k-1}})\Big ]$.

At time $t_{k'}$, given each posterior $\gamma_k \in Supp({\cal R}_{\mu_k})$, he buys another experiment ${\cal R}_{\gamma_k}$ and for each posterior $\gamma_{k'} \in Supp({\cal R}_{\gamma_k})$ he makes an announcement $y_{\gamma_{k'}}$, when the previous announcement is $y_{\gamma_k}$ (since all other traders keep repeating $y_{\gamma_k}$ from $t_{k}$ to $t_{k'}$). His expected utility at $t_{k'}$, given belief $\mu_k$, is

\begin{equation}
\label{eq t_{k'}}
\beta^{t_{k'}}\mu_k \Bigg (\underset{\gamma_k \in Supp({\cal R}_{\mu_{k}})}{\sum} {\cal R}_{\mu_{k}}(\gamma_k)  \Bigg ( \underset{\gamma_{k'} \in Supp({\cal R}_{\gamma_{k}})}{\sum} E_{\gamma_{k'}} \Big [s(y_{\gamma_{k'}})-s(y_{\gamma_k})\Big ] -cK({\cal R}_{\gamma_k}) \Bigg ) \Bigg ).
\end{equation}

If (\ref{eq t_{k'}}) is strictly positive, Trader $i$ could increase his expected utility by bringing forward these announcements, from $t_{k'}$ to $t_k$. The sum of the utilities would be

\begin{equation*}
\beta^{t_k} \mu_k\Bigg (\underset{\gamma_k \in Supp({\cal R}_{\mu_{k}})}{\sum} {\cal R}_{\mu_{k}}(\gamma_k) \Bigg [
E_{\gamma_k} \Big [s(y_{\gamma_k})-s(y_{t_{k-1}})\Big ] + 
\end{equation*}
\begin{equation*} \Bigg ( \underset{\gamma_{k'} \in Supp({\cal R}_{\gamma_{k}})}{\sum} E_{\gamma_{k'}} \Big [s(y_{\gamma_{k'}})-s(y_{\gamma_k})\Big ] -cK({\cal R}_{\gamma_k}) \Bigg ) \Bigg ] -cK({\cal R}_{\mu_k}) \Bigg ) = 
\end{equation*}

\begin{equation}
\label{eq combined}
\beta^{t_k} \mu_k\Bigg (\underset{\gamma_k \in Supp({\cal R}_{\mu_{k}})}{\sum} {\cal R}_{\mu_{k}}(\gamma_k)  \Bigg [ \Bigg ( \underset{\gamma_{k'} \in Supp({\cal R}_{\gamma_{k}})}{\sum} E_{\gamma_{k'}} \Big [s(y_{\gamma_{k'}})-s(y_{t_{k-1}})\Big ] -cK({\cal R}_{\gamma_k}) \Bigg ) \Bigg ] -cK({\cal R}_{\mu_k}) \Bigg ). 
\end{equation}

If (\ref{eq t_{k'}}) is strictly positive, then (\ref{eq combined}) is strictly greater  than (\ref{eq t_{k}}) + (\ref{eq t_{k'}}). Moreover, Trader $i$ does not have to make two announcements at $t_k$. Instead of moving the announcement from $y_{t_{k-1}}$ to $y_{\gamma_k}$ and then to $y_{\gamma_{k'}}$, he can just move the announcement from $y_{t_{k-1}}$ to $y_{\gamma_{k'}}$. From Assumption \ref{ass: costly sequence of signals}, he is able to do this with lower cost. 

It may be that (\ref{eq t_{k'}}), Trader \is expected utility at $t_{k'}$ given belief $\mu_k$, is not positive. However, we have assumed that the ex ante utility at $t_{k'}$ (the expectation over all $\mu_k$ that arise given all signal realizations before $t_k$) is strictly positive. This means that bringing forward all announcements from $t_{k'}$ to $t_{k}$ will strictly increase \is ex ante expected utility. 

Because the sum of expected utilities over all periods is strictly positive, there is finite time $t_{k_T}$ such that the sum of expected utilities over all periods up to $t_{k_T}$ is also strictly positive. Without loss of generality, the expected utility at $t_{k_T}$ is strictly positive, otherwise we can consider an earlier $t_{k_T}$ and the total sum will still be strictly positive.  Consider the following process which starts at $t_{k_T}$ and ends at $t_1$, the time where Trader $i$ deviates initially. If the expected utility at $t_k$, $1 < k \leq T$, is negative, remove it from the sum. Then, the remaining sum increases and therefore remains strictly positive. If the expected utility at $t_k$ is positive, bring forward the announcements to $t_{k-1}$. As described above, the sum of expected utilities increases and therefore it remains strictly positive. As there are finitely many steps, we will reach the first period $t_1$, where Trader $i$ starts his deviation. From this process, the expected utility at $t_1$ is strictly positive. However, this contradicts the fact that at the first time where Trader $i$ deviates, his expected utility is weakly negative,  even if he always makes the myopically best announcement, because the security is $\kappa$ non-separable. Hence, there is no profitable deviation.

\end{proofEnd}

The second bullet point shows that if security $X$ is $\kappa$ non-separable, then the strategy profile where everyone repeats the previous announcement is a Nash equilibrium with no information aggregation. This differs from the no information acquisition setting in \cite{ostrovsky12}, who shows that if $X$ is non-separable then the strategy profile where everyone announces the expected value $v$ is a Perfect Bayesian Equilibrium. To understand the difference between the two settings, note that in both cases the relevant deviation is myopically optimal. In the setting without information acquisition, if Trader $i$ deviates by announcing $v'$, this does not alter his beliefs. The other traders can safely ignore the deviation and keep announcing $v$, because their beliefs are unchanged along this off-equilibrium path. In that case, the constant-announcement profile can be supported as a Perfect Bayesian Equilibrium. 

In the information acquisition setting, by contrast, a deviation in which Trader $i$ acquires a signal and announces $v'$ alters his beliefs. If the other traders keep repeating $v$, then Trader $i$ will make profits in the subsequent periods because now $v'$ is the myopically optimal announcement for him. Hence, the constant-announcement profile is no longer a Nash equilibrium. Instead, we consider the strategy profile where all traders repeat the previous announcement and the proof is more involved. At the first period in which Trader $i$ deviates, his payoff from acquiring information and making the myopically best announcement $v'$ is weakly negative by 
$\kappa$ non-separability and zero in all subsequent periods, as other traders repeat $v'$. Can he profit by buying signals in more than one period? If such a deviation was profitable, he could combine all subsequent signals into one signal in period $t$, and from Assumption \ref{ass: costly sequence of signals} the cost of the combined signal is not higher than the sum of the individual signals. Because of discounting, bringing the gains forward will make him better off. But then, this contradicts the result that a one-period deviation is not profitable. We therefore obtain a Nash equilibrium with no information aggregation. We do not claim here that this strategy profile is also a Perfect Bayesian Equilibrium, because establishing it would require specifying off-equilibrium beliefs after every off-path announcement, which is nontrivial in the infinite-action setting.

Although Theorem \ref{thm:info aggregation} ensures that information aggregates in all Nash equilibria, it does not guarantee that an equilibrium exists. This is true also in \cite{ostrovsky12}, and the reason is that the action spaces are infinite. In Appendix \ref{existence}, we prove the existence of a Perfect Bayesian Equilibrium if the action spaces are finite. We first show that if we discretize the action spaces, so that traders can choose from a finite set of announcements and signals, then a Perfect Bayesian Equilibrium exists whenever the time horizon is finite. We then show that each game $\Gamma$ is continuous at infinity. Finally, we adapt the proofs of \cite{fudenbergLevine83, fudenbergLevine86} to approximate the infinite horizon game with a sequence of finite horizon games and show that the sequence of Perfect Bayesian equilibria in the games with finite horizon converges to a Perfect Bayesian equilibrium in the game with infinite horizon.%
\footnote{\cite{ostrovsky12} describes such an approach but does not provide further details. Existence of equilibrium is also proven by \cite{galanisEtAl24}, for the identical game without information acquisition but with ambiguity aversion.}
We also show that the sequence of announcements converges to a limit which is as close as possible to the correct value of the security. Hence, information aggregation improves as the action space becomes richer.

The second part of our main result says that if security $X$ pays differently across all states, then it is $\kappa$ separable for some marginal cost of information $c$, given {\it any} information structure $\Pi$ and cost function $K$. In other words, if information is sufficiently cheap to acquire, there are easily describable securities where information aggregation is achieved at all Nash equilibria and independently of who participates in the market or what is their private information.  

\begin{theoremEnd}{proposition} \label{prop:different values}
If $X(\omega) \neq X(\omega')$ for all $\omega, \omega' \in \Omega$, then, given any $\Pi$ and $K$, $X$ is $\kappa$ separable for some $c>0$.
\end{theoremEnd}

\begin{proofEnd}
Given $\Pi$ and $K$, we will show that if there is no information acquisition at some $\mu$ and for sufficiently low $c$, then traders cannot agree on the announcement, hence the security is $\kappa$ separable at $\mu$. When the marginal cost of information acquisition $c$ is sufficiently low, and given that $X$ pays differently across all states, the fact that $K$ is sufficiently smooth around zero information (Condition (ii) in Definition \ref{def:cost experiments}) implies that there is no information acquisition only if all posterior beliefs assign a probability close to 1 to some state $\omega' \in \Pi_i(\omega)$, for all $i \in I$ and $\omega \in Supp(\mu)$.  Suppose that there exists $\mu$ with $E_{\mu}[X|\Pi_{i}(\omega)]=v$ for all $i=1,...,n$ and $\omega \in Supp(\mu)$, where $v$ is arbitrarily close to some $m$ with $X(\omega) = m$.
 
 For the security to be $\kappa$ non-separable,  there should also be uncertainty about its value. This means that for some $\omega' \in Supp(\mu)$, $X(\omega') \neq v$.  From the assumption in Section \ref{preliminaries} that $\underset{i \in I}{\bigcap} \Pi_i(\omega) = \{\omega\}$, there exists Trader $i$ such that $m \notin X(\Pi_i(\omega'))$, so that he considers $m$ impossible. Because he assigns a probability close to 1 to some state, from continuity it must be $E_{\mu}[X|\Pi_{i}(\omega')] \neq v$,  a contradiction. As this is true for all $\mu$ that assign probability close to 1 to some state $\omega' \in \Pi_i(\omega)$, we have that the security $X$ is $\kappa$ separable for some $c$. 
    
\end{proofEnd}

Proposition \ref{prop:different values} echoes the following result within the context of perfectly competitive markets, as pointed out in \cite{laffontMaskin90}. As long as there are ``enough prices'', so that trade can be made contingent on sufficiently many events, then the competitive equilibrium is generically separating and the function relating information to prices is invertible (\cite{grossman76}, \cite{radner79}, \cite{allen81}). We could interpret Proposition \ref{prop:different values} as a generalization of this result, because our setting applies to all Nash equilibria, not just competitive equilibria. 

It would be natural to expect that, for a sufficiently low marginal cost $c$, any security eventually becomes $\kappa$ separable. But this is not true. Consider the following example, which appears in similar form in \cite{geaPol82} and in Example 1 of \cite{ostrovsky12}.

\begin{Example} \label{example geanakoplos polemarchakis}
The state space is $\Omega = \{\omega_1, \omega_2, \omega_3, \omega_4\}$, the security is $X= (0,1,0,1)$, and there are two traders with common prior $\mu = (1/4,1/4,1/4,1/4)$. Trader 1's partition is $\{\{\omega_1, \omega_2\}, \{\omega_3, \omega_4\}\}$ and Trader 2's is $\{\{\omega_1, \omega_4\}, \{\omega_2, \omega_3\}\}$.
\end{Example}


With a uniform prior, all traders agree that the expected value of $X$ is 0.5, at all states, hence the security is non-separable. Consider now  prior $\mu = (\frac{1-2m}{2},m, \frac{1-2m}{2},m)$, where $0 < m <0.5$. Given any $m$, each Trader's expected value of $X$ at all states is $2m$, so for $m$ very close to $0.5$, the consensus announcement is very close to $1$. Given $K$ and for any $c>0$, we can always specify $m$ that is sufficiently close to $0.5$, so that no Trader is willing to buy a signal ${\cal R}$, because this will move their expected value of $X$ from $1-2\epsilon$, which is the previous announcement, to $1-\epsilon$ when they receive the signal realisation, for small enough $\epsilon>0$. Their maximum benefit, when 1 is the true state, is $s(1-\epsilon, 1) - s(1-2\epsilon,1)$. Because the scoring rule is proper, this benefit is strictly positive but decreasing, as $\epsilon$ decreases.  However, as the signal becomes ever more precise, its cost $cK({\cal R})$ is weakly increasing. This means that for high enough $m <0.5$, the expected benefit will be lower than the cost. Because there is no information acquisition and all traders agree on the expected value, security $X$ is $\kappa = (K, c)$ non-separable for all $c>0$.%
\footnote{It is interesting to note that the incompatibility of the  $(0,1,0,1)$ vector with information aggregation is met in other settings as well. In the model of \cite{degroot74}, agents update their beliefs naively, by looking at their immediate neighbours, according to a fixed network. If the network is represented by a periodic matrix, such as $A=\begin{pmatrix}
  0 & 1\\ 
  1 & 0
\end{pmatrix}$, then beliefs do not converge (\cite{golubJackson10}).}

The following Proposition generalizes this example, showing that the only securities that never become $\kappa$ separable pay $(a, d, b, d)$, in four states,  where either $a,b < d$ or $a,b > d$. 

\begin{theoremEnd}{proposition}\label{prop:kappa non-separable characterization}
If $\Omega$ has at least four states, the following are equivalent:
\begin{itemize}
    \item $X$ is $\kappa$ non-separable given some $\Pi$ and for all $\kappa \in {\cal K}$,
    \item $X$ pays $(a, d,b, d)$ in four states, where either $a,b < d$ or $a,b > d$.
\end{itemize}
\end{theoremEnd}

\begin{proofEnd}
Suppose $\Omega$ has at least four states and let $E = \{\omega_1, \omega_2, \omega_3, \omega_4\}$, where $X$ pays $(a, d, b, d)$, with either $a,b < d$ or $a,b > d$.  What $X$ pays outside $E$ is irrelevant, because we will only consider priors that have full support on $E$, in order to show that $X$ is $\kappa$ non-separable given some $\Pi$ and for all $\kappa$. Consider two traders with the following information structure on $E$. Trader 1's partition is $\{\{\omega_1, \omega_2\}, \{\omega_3, \omega_4\}\}$ and Trader 2's is $\{\{\omega_1, \omega_4\}, \{\omega_2, \omega_3\}\}$. We need to show that for any $\kappa = (K,c)$, $X$ is $\kappa$ non-separable. It is enough to show that, as the marginal cost $c$ converges to 0, there is always a common prior $\mu$ on $E$ such that no Trader acquires any information and $X$ is non-separable at $\mu$. 

Consider  prior $\mu = (p,m, 1-p-2m,m)$, where $0 < m <0.5$ and $0< p <1-2m$. We specify $p$ such that, given any $m$, each Trader's expected value of $X$ is the same at all states. In the following equation, the left hand-side computes Trader 1's expected value of $X$ at $\{\omega_1, \omega_2\}$ and Trader 2's at $\{\omega_1, \omega_4\}$, whereas the right hand-side computes Trader 1's expected value at $\{\omega_3, \omega_4\}$ Trader 2's at $\{\omega_2, \omega_3\}$:
\begin{equation} \label{eq for p}
   a \frac{p}{p+m} + d \frac{m}{p+m} = b \frac{1-p-2m}{1-p-m} + d \frac{m}{1-p-m}. 
\end{equation}

There is always a solution for some $p \in (0, 1-2m)$ to this equation. To see this, group the four terms of (\ref{eq for p}) on one side. If $p$ is very close to zero, we have
\[ d - b \frac{1-2m}{1-m} - d \frac{m}{1-m} = \frac{d-2dm-b(1-2m)}{1-m} = \]
\[\frac{d(1-2m)-b(1-2m)}{1-m} = \frac{(d-b)(1-2m)}{1-m} \]

If $p$ is very close to $1-2m$ we have
\[a \frac{1-2m}{1-m} + d \frac{m}{1-m} - d = \frac{a(1-2m) - d(1-2m)}{1-m} = \frac{(a-d)(1-2m)}{1-m}\]
In both cases, $a,b < d$ or $a,b > d$, one expression is strictly positive and the other is strictly negative, because $1-m$ and $1-2m$ are always positive. By continuity, there exists $p$ at which the expression is zero, hence there is a solution.

Let $u$ be the expected value of $X$ at all states and for all traders. Fix any marginal cost $c$, which may be very close to 0. We can then choose $m$ very close to 0.5 (and corresponding $\mu$) such that $u$ is very close to $d$, say $d-\epsilon$, assuming, without loss of generality, that $a,b < d$, otherwise we have $d+\epsilon$ and the rest of the proof proceeds accordingly.  By repeating the consensus announcement of $u = d-\epsilon$ and not acquiring any signals, both traders get 0 utility. Suppose that one Trader wants to buy a signal ${\cal R}$ with random posterior $Q$, which, in the case of $X(\omega) =d$, will move his posterior expected value of $X$ from $d-\epsilon$ to $d-\epsilon+\nu$, for some $0<\nu<\epsilon$. The upper bound of his utility, net of the cost of the signal, is realised at $X(\omega) = d$ and it is $s(d-\epsilon+\nu,d) - s(d-\epsilon,d)$. This is positive because $d-\epsilon+\nu$ is closer to $d$ than the previous announcement, $d-\epsilon$, and the proper scoring rule is order-sensitive (see footnote \ref{fn: ostrovsky p2618}). However, by decreasing $\epsilon$ appropriately it can be made as small as needed, and therefore smaller than $cK({\cal R})$ for all ${\cal R}$.  As the benefit becomes smaller and $\epsilon$ decreases, a signal must be more precise and $\nu$ will decrease so that $d-\epsilon + \nu <d$, hence the cost of the signal will increase weakly, from property (iii) of Definition \ref{def:cost experiments}. Because the benefit converges to zero, at some threshold the cost will be higher than the benefit and no signal ${\cal R}$ will be acquired.
Hence, no Trader will buy any signal and $X$ is $\kappa$ non-separable.

For the converse, suppose $X$ is $\kappa$ non-separable given some $\Pi$ and for all $\kappa$. The contrapositive of Proposition \ref{prop:different values} 
implies that there exist  two states $\omega_a \neq \omega_b$ such that $X(\omega_a)=X(\omega_b) = m$. Let $M$ be the set of states $\omega$ with $X(\omega) = m$. Given that $\Omega$ has at least four states, we have the following cases.


Case 0. $M = \Omega$. Security $X$ is constant and therefore (trivially) always separable, for all information structures $\Pi$. This is a contradiction because we have assumed $X$ is $\kappa$ non-separable given some $\Pi$ and for all $\kappa$.

Case 1. There is either a unique $\omega \notin M$, or there are exactly two states $\omega,\omega' \notin M$, such that $X(\omega) > m > X(\omega')$. From Proposition \ref{prop:always separable}, $X$ is always separable, for all information structures $\Pi$. As with Case 0, this is a contradiction.
The only other remaining case is that there exist $\omega, \omega' \notin M$ with either $X(\omega), X(\omega') < m$, or $X(\omega), X(\omega') > m$, which concludes the proof.



\end{proofEnd}

To interpret this result, note that state space $\Omega$ is fixed independently of the securities and it describes all the higher order beliefs of traders about what others know and believe. The Proposition says that information aggregation may fail if there are two distinct states which specify the exact same payoff, yet they differ in terms of the higher order beliefs of traders and the only thing that matters to them are the payoffs. 

\section{Classification of Securities}
\label{classification of securities}

Using the results from Section \ref{main result}, we can provide a complete classification of securities in terms of how well they aggregate information, which surprisingly depends only on their payoff structure. There are three types of securities. First, the {\it always separable} securities, which we characterise in this section, are separable for {\it all} information structures and therefore aggregate information even if information acquisition was unavailable. Second, the $\kappa$ non-separable for all $\kappa$ and {\it some} information structure, such as the $(1,0,1,0)$ security, which may fail information aggregation even if $c>0$ is very small.  Finally, the securities which are $\kappa$ separable for some $\kappa$ and {\it all} information structures, pay differently across all states and  aggregate information if $c$ is low enough. This classification does not depend on who is trading or what is their information structure. Moreover, it is very simple to classify each security, which is not true when determining whether a security is separable or not, given an information structure.

The always separable securities aggregate information even if costly signals are not available. Unfortunately, this class is very small and uninformative, as shown by the following Proposition. It consists of just three types. The first is the constant, paying the same at all states. The second is  the Arrow-Debreu (A-D), which pays $a$ at some state $\omega$ and $b$ at all other states. The third pays $a$ at some state $\omega$, $d$ at $\omega'$, and $b$ at all other states, where $a < b < d$. 



\begin{theoremEnd}{proposition}\label{prop:always separable}
 The only non-constant securities that are separable for all information structures in ${\cal P}$ are the A-D and the security that is of the following form. There are values $a<b<d$ such that $X(\omega_a)=a$ and $X(\omega_d) = d$ for two states $\omega_a, \omega_d$, and $X(\omega) = b$ for all $\omega \neq \omega_a, \omega_d$. 
\end{theoremEnd}

\begin{proofEnd} We start by restating a useful characterization of separable securities by \cite{ostrovsky12}. It specifies that $X$ is separable if and only if, for any possible announcement $v$,  we can find numbers $\lambda_i(\Pi_i(\omega))$, for each $i$ and $\omega$, such that the sum over all traders has the same sign as the difference of $X(\omega) -v$. Intuitively, for any $v$ and at each $\omega$, all traders ``vote'' and the sign of the sum of the votes has to agree with the sign of the difference between the value of the security and $v$.

\begin{proposition}[\cite{ostrovsky12}] \label{ostrovsky char thm}
Security $X$ is separable under partition structure $\Pi$ if and only if, for every $v \in {\mathbb R}$, there exist functions $\lambda_i: \Pi_i \rightarrow {\mathbb R}$ for $i = 1, \ldots, n$ such that, for every state $\omega$ with $X(\omega) \neq v$, 
\[(X(\omega) - v)\underset{i \in I}{\sum} \lambda_i(\Pi_i(\omega)) > 0.\]
\end{proposition}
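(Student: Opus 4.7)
The plan is to use the $\lambda$-characterization of \cite{ostrovsky12} (restated as Proposition~\ref{ostrovsky char thm}) as the main tool, and split the argument into sufficiency (both listed classes are separable under every $\Pi \in {\cal P}$) and necessity (any other non-constant security admits some $\Pi \in {\cal P}$ under which separability fails).

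For sufficiency I would build the $\lambda_i$'s explicitly. For the A-D security paying $a$ at a single state $\omega^*$ and $b$ elsewhere, with WLOG $a < b$, the only non-trivial case is $v \in (a,b)$: set $\lambda_i(P) = -K$ whenever $\omega^* \in P$ and $\lambda_i(P) = 1$ otherwise, for a small constant $K > 0$. The join condition forces, for each $\omega \neq \omega^*$, at least one trader to separate $\omega$ from $\omega^*$, so the number of negative summands at $\omega$ is at most $n-1$, making the total strictly positive once $K < 1/(n-1)$; for $v \leq a$ or $v \geq b$ take $\lambda_i \equiv \pm 1$. For the three-value security with $X(\omega_a)=a$, $X(\omega_d)=d$, and $X \equiv b$ elsewhere, the same ``single out $\omega_a$'' (resp.\ ``single out $\omega_d$'') construction handles $v \in (a,b)$ and $v \in (b,d)$. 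For the knife-edge $v=b$ I set $\lambda_i(P) = -K$ when $\omega_a \in P$ but $\omega_d \notin P$, $+K$ when $\omega_d \in P$ but $\omega_a \notin P$, and $0$ otherwise; since some trader distinguishes $\omega_a$ from $\omega_d$, the sums at these two states have the required signs, while b-states impose no constraint.

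For necessity I would construct explicit non-separable examples via a Geanakoplos--Polemarchakis-style cycle on four states. Any non-constant security outside the two listed classes falls into one of three subcases: (i) two distinct values each with multiplicity $\geq 2$; (ii) three distinct values $a<b<d$ with $n_a \geq 2$ or $n_d \geq 2$; or (iii) at least four distinct values. In each subcase I pick four states $\omega_1, \ldots, \omega_4$ and take two-trader partitions whose non-singleton blocks are $\{\omega_1,\omega_2\}, \{\omega_3,\omega_4\}$ for trader~1 and $\{\omega_1,\omega_4\}, \{\omega_2,\omega_3\}$ for trader~2, with every remaining state sitting in its own singleton so that $\Pi \in {\cal P}$. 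I then seek a positive probability vector $(\mu_1,\mu_2,\mu_3,\mu_4)$ and value $v$ satisfying $\mu_j(X(\omega_j)-v) + \mu_{j'}(X(\omega_{j'})-v) = 0$ for each paired block $\{j,j'\}$. These four equations reduce to two ratio conditions and admit a strictly positive solution exactly when the paired values lie on opposite sides of $v$: for (i) take $v=(a+b)/2$ with two $a$-states and two $b$-states; for (ii), assuming WLOG $n_a \geq 2$, assign $X(\omega_1)=X(\omega_3)=a$, $X(\omega_2)=b$, $X(\omega_4)=d$ with any $v \in (a,b)$; for (iii) label four distinct values so that $X(\omega_1), X(\omega_3) < v < X(\omega_2), X(\omega_4)$, which is possible by picking $v$ strictly between the second- and third-smallest value. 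The main obstacle I anticipate is the case analysis in the necessity direction---especially tracking strict positivity of $\mu$ across subcases and verifying that the small-space boundary $|\Omega|=3$ admits no ``bad'' security, which is immediate because with three states only the A-D and three-value good patterns arise.
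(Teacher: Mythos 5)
There is a fundamental mismatch here: what you have written is not a proof of the stated proposition but a proof of a different result, namely Proposition~\ref{prop:always separable} (the classification of always-separable securities). Your very first sentence announces that you will ``use the $\lambda$-characterization of \cite{ostrovsky12} (restated as Proposition~\ref{ostrovsky char thm}) as the main tool'' --- that is, you assume the statement you were asked to prove and then deploy it. The sufficiency half of your argument (constructing explicit $\lambda_i$'s for the A-D and three-value securities) only ever invokes the ``if'' direction of the characterization as a black box, and the necessity half (the four-state Geanakoplos--Polemarchakis cycle with an explicit non-separating prior) works directly from Definition~\ref{separable} and never touches the $\lambda$'s at all. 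Neither half establishes the equivalence that the proposition asserts. (For what it is worth, as a proof of Proposition~\ref{prop:always separable} your outline tracks the paper's own argument for that result quite closely; but that is not the statement at issue.)

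A genuine proof of Proposition~\ref{ostrovsky char thm} is a duality argument. The easy direction: if for every $v$ such $\lambda_i$ exist, then no non-separating pair $(\mu,v)$ can exist, because condition $(ii)$ of Definition~\ref{separable} forces
\[
E_\mu\Bigl[(X(\omega)-v)\underset{i\in I}{\sum}\lambda_i(\Pi_i(\omega))\Bigr]
=\underset{i\in I}{\sum}\;\underset{\pi\in\Pi_i}{\sum}\lambda_i(\pi)\,\mu(\pi)\,\bigl(E_\mu[X\mid\pi]-v\bigr)=0,
\]
which contradicts the integrand being strictly positive on a set of positive $\mu$-measure (condition $(i)$). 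The substantive direction is the converse: if for some $v$ no such $\lambda_i$ exist, one must \emph{produce} a non-separating prior. This is a finite-dimensional separating-hyperplane (or LP-duality / Farkas-type) argument: the nonexistence of $\lambda$'s solving the system of strict inequalities indexed by $\{\omega: X(\omega)\neq v\}$ yields, by duality, nonnegative weights on those states --- which, after normalization and combination with the states where $X(\omega)=v$, assemble into a probability $\mu$ satisfying $E_\mu[X\mid\Pi_i(\omega)]=v$ on its support while $X$ is non-constant there. None of this machinery appears in your proposal, so the proposition itself remains unproved.
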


If $\Omega$ has up to three states, then all securities are of the two types that we have described or the uninteresting case of a constant security. Hence, without loss of generality, we fix a state space $\Omega$ with at least four states and a security $X$. If $X$ is constant, it is trivially separable. \cite{ostrovsky12} shows that an A-D security is always separable. 

We now show that $X$ is always separable if it is of the following form. Suppose there are $a<b<c$ such that $X(\omega_a)=a$ and $X(\omega_c) = c$ for two states $\omega_a, \omega_c$, whereas $X(\omega) = b$ for all $\omega \neq \omega_a, \omega_c$.%
\footnote{Our proof for this type of security also applies to an A-D security. \cite{ostrovsky12} used Corollary 1 to show that an A-D security is always separable, however, we cannot use it for this type of security.}
Using Proposition \ref{ostrovsky char thm}, we need to show that for every $v \in {\mathbb R}$, there exist functions $\lambda_i: \Pi_i \rightarrow {\mathbb R}$ for $i = 1, \ldots, n$ such that, for every state $\omega$ with $X(\omega) \neq v$, 
\begin{equation} \label{char condition}
(X(\omega) - v)\underset{i \in I}{\sum} \lambda_i(\Pi_i(\omega)) > 0.
\end{equation}

If $v\leq a$, then condition (\ref{char condition}) is satisfied by setting $\lambda_i(\Pi_i(\omega)) = 1$ for all $i \in I$ and $\omega \in \Omega$. Similarly, if $v \geq c$, we set $\lambda_i(\Pi_i(\omega)) = -1$ for all $i \in I$ and $\omega \in \Omega$. Suppose that $a<v \leq b<c$. For all $i \in I$, set $\lambda_i(\Pi_i(\omega_a)) = -1$ and (\ref{char condition}) is satisfied for $\omega_a$. For all $\omega$ with $\Pi_i(\omega) \neq \Pi_i(\omega_a)$, set $\lambda_i(\Pi_i(\omega)) = k$, where $k = |I|$ is the number of agents. Because of our assumption that the join of all partitions consists of singleton sets, we have that for each $\omega \neq \omega_a$, there exists $i$ such that $\Pi_i(\omega) \neq \Pi_i(\omega_a)$. This implies that if $X(\omega) - v > 0$, we also have $\underset{i \in I}{\sum} \lambda_i(\Pi_i(\omega)) \geq k-(k-1)>0$  and (\ref{char condition}) is satisfied for $\omega$. Using a symmetric argument, we can show that (\ref{char condition}) is satisfied for $a< b\leq v < c$, by setting $\lambda_i(\Pi_i(\omega_c)) = 1$ and $\lambda_i(\Pi_i(\omega)) = - k$ for $\omega$ with $\Pi_i(\omega) \neq \Pi_i(\omega_c)$, for all $i \in I$. By applying Proposition \ref{ostrovsky char thm}, security $X$ is always separable.

Suppose that $X$ is not of the three aforementioned types. Then, we can find four distinct states where $X$ assigns values $a \leq b < c \leq d$. 
For simplicity, we refer to the state with value $a$ as state $a$ and similarly for $b,c$, and $d$. 

We will show that $X$ is non-separable for an information structure in ${\cal P}$ with two agents. The partition of Trader 1 is $\{\{a,d\},\{b,c\}\}$ for these four states, whereas for any other state, we have $\Pi_1(\omega) = \{\omega\}$. For Trader 2 it is $\{\{a,c\},\{b,d\}\}$ and for any other state we have $\Pi_2(\omega) = \{\omega\}$. Hence, the information structure is in ${\cal P}$.

To show that $X$ is non-separable, it is enough to find a prior $p$ with support on $\{a,b,c,d\}$ such that, for some $v$,

\begin{itemize}
\item[$(i)$] $X(\omega) \neq v$ for some $\omega \in Supp(p)$,
\item[$(ii)$] $E_{p}[X|\Pi_{i}(\omega)]=v$ for $i=1,2$ and $\omega \in Supp(p)$. 
\end{itemize}

Let $p_1$ be 1's probability of state $a$ conditional on $\{a,d\}$, whereas $q_1$ is 1's probability of state $b$ conditional on $\{b,c\}$. Let $p_2$ be 2's probability of state $a$ conditional on $\{a,c\}$, whereas $q_2$ is 2's probability of state $b$ conditional on $\{b,d\}$. Condition (ii) then translates to the following equations 
 \begin{equation} \label{cond ii}
\begin{cases}
ap_1 + d(1 - p_1) =  bq_1 + c(1 - q_1) \\
ap_1 + d(1 - p_1) =  ap_2 + c(1 - p_2) \\
ap_1 + d(1 - p_1) =  bq_2 + d(1 - q_2)\\
bq_1 + c(1 - q_1) =  ap_2 + c(1 - p_2) \\
ap_1 + d(1 - p_1) = bq_2 + d(1 - q_2) \\
ap_2 + c(1 - p_2) =  bq_2 + d(1 - q_2) 
\end{cases}
\end{equation}
 
 The posteriors of the two agents can be derived by a common prior $p$ if the following conditions hold:
 \begin{equation} \label{cond cp}
\begin{cases}
xp_1 = yp_2\\
(1-x)q_1 = (1 - y)q_2\\
(1 - x)(1 - q_1) = y(1 - p_2)\\
 x(1 - p_1) = (1-y)(1 - q_2)
\end{cases}
\end{equation}
 
where $x$ is the prior probability of $(a,d)$ and $y$ is the prior probability of $(a,c)$.
 
When $a \leq b<c \leq d$, the system (\ref{cond ii} - \ref{cond cp})  has the following solution:
\begin{equation*} 
\begin{split} 
q_1 = \frac{a - c}{a + b - c - d} ,\\
p_1 = \frac{b - d}{a + b - c - d} ,\\
p_2 = \frac{b - c}{a + b - c - d} ,\\
q_2 = \frac{a - d}{a + b - c - d} ,\\
x = p_2,\\
y = p_1. 
\end{split}
\end{equation*}
These posteriors uniquely define the respective prior probabilities.
\end{proofEnd}

All three types of securities are not very informative. Even when there is information aggregation and the price of $X$ always converges to the true value $X(\omega)$ at state $\omega$, the A-D only reveals whether $\omega$ has occurred or not, and the second type only reveals that either $\omega$, $\omega'$, or neither, have occurred. In other words, the price of $X$ does not reveal information about most events in $\Omega$. In contrast, the most informative security $X'$ pays differently across all states. If there is information aggregation, then $X'$ reveals whether any event in $\Omega$ has occurred or not. However, because $X'$ is not always separable, we know that for some information structure $\Pi$, information does not aggregate.




We now formally state our classification. Let $X$ be a security defined on $\Omega$ that has at least four states. If there is $m$ such that $X(\omega') = X(\omega'') = m$ for at least two states $\omega', \omega'' \in \Omega$, let $M$ be the set of states $\omega$ with $X(\omega) = m$. Otherwise, set $M = \emptyset$.

{\bf Case 0.} $M = \Omega$. Security $X$ is constant and therefore (trivially) always separable, for all information structures $\Pi$.

{\bf Case 1.} There is either a unique $\omega \notin M$, so that the security is A-D, or there are exactly two states $\omega,\omega' \notin M$, such that $X(\omega) > m > X(\omega')$. From Proposition \ref{prop:always separable}, $X$ is always separable, for all information structures $\Pi$.

{\bf Case 2.} There exist $\omega, \omega' \notin M$ with either $X(\omega), X(\omega') < m$, or $X(\omega), X(\omega') > m$. From Proposition \ref{prop:kappa non-separable characterization}, $X$ is $\kappa$ non-separable for all $\kappa$, for some information structure $\Pi$. 

{\bf Case 3.} $M = \emptyset$, hence $X(\omega) \neq X(\omega')$, for all $\omega, \omega' \in \Omega$. From Proposition \ref{prop:different values}, $X$ is $\kappa$ separable for some $\kappa$, for all information structures $\Pi$.


We conclude the section with a few remarks on the properties of $\kappa$ separable securities. First,  separability implies $\kappa$ separability, for all $\kappa$, because there does not exist a prior at which the security is non-separable. Second, non-separability implies $\kappa$ non-separability for some $\kappa$, because for a high enough marginal cost $c$, no Trader will acquire any information.

\begin{Remark}
If security $X$ is separable given an information structure $\Pi \in {\cal P}$, then it is also $\kappa$ separable for any cost structure $\kappa \in {\cal K}$.
\end{Remark}

 
\begin{Remark}
 If security $X$ is non-separable given an information structure $\Pi \in {\cal P}$, then for any cost function $K$ it is $\kappa = (K,c)$ non-separable for some marginal cost of information $c$.
\end{Remark}

Note that if $X$ is $\kappa = (K,c)$ non-separable, then there is a $\mu$ for which there is no information acquisition and $X$ is non-separable at $\mu$. If we increase the marginal cost to $c'>c$, then there is still no information acquisition for $\mu$, hence $X$ is  $\kappa' = (K,c')$ non-separable. Conversely, if $X$ is $\kappa = (K,c)$  separable, it is separable for all $\mu$ for which there is no information acquisition. If we decrease the marginal cost to $c' < c$, then the set of priors for which there is no information acquisition will shrink and therefore $X$ will be $\kappa' = (K,c')$ separable for all $c' < c$. We, therefore, have the following remark.

\begin{Remark} \label{remark: lower cost kappa separable}
If $X$ is $\kappa = (K,c)$ non-separable (separable), then it is $\kappa' = (K,c')$ non-separable (separable) for all $c' > c$ ($c' < c$). 
\end{Remark}

\section{The Value of the Market}
\label{sec: value of the market}


If the cost of information drops significantly, do we even need markets to aggregate information through prices? If each Trader can independently afford all relevant signals, then the average of their predictions would be very close to the outcome of the market, rendering it obsolete. In this section, we argue with an example that this intuition is not correct, because markets can aggregate information long before it becomes economically viable for each Trader to acquire the required information on their own. Hence, markets become even more important in an environment with information acquisition. 

To illustrate this point, we compare the market's prediction accuracy to that of a poll where traders simultaneously make a single announcement.  The poll's prediction is then calculated as the average of their individual predictions.%
\footnote{Note that there are many ways of improving the accuracy of a poll by aggregating announcements differently \citep{baron2014two}. In our framework, markets will always be more accurate than polls because more information is disseminated through multiple rounds of announcements, and the value of information is positive. Several papers have examined the two settings in experiments and real-life settings, and the results are mixed.  \cite{snowberg2013prediction} argue that prediction markets are better. \cite{bergEtAl08} show that the Iowa Electronic Markets were more accurate than 964 polls in predicting the outcomes of five presidential elections between 1988 and 2004. \cite{CowgillZitzewitz15} show that internal prediction markets in Google and Ford were more accurate than the predictions of professional forecasters. On the other hand,  \cite{atanasovEtAl16, dana2019markets} argue that while prediction markets are more accurate than the simple mean of forecasts from polls, the latter outperform prediction markets when forecasts are aggregated with transformation algorithms or made in teams. \cite{camererEtAl16} show that markets are equally accurate with a survey in predicting the replicability of economic experiments.}
%

Fix a cost function $K$ and a prior $\mu$ for which the security is non-separable and then gradually decrease $c$, the marginal cost of information. For high enough $c$, the market and the poll are equally accurate. At a threshold $\bar{c}$, the security becomes $\kappa$ separable and the accuracy of the market increases discontinuously to 1, whereas the accuracy of the poll increases gradually as $c$ decreases further.

Recall Example \ref{main example} with state space $\Omega = \{\omega_1, \omega_2, \omega_3, \omega_4\}$, security $X= (0,1,2,3)$, and common prior $\mu_0=(\frac{1}{8},\frac{3}{8},\frac{3}{8},\frac{1}{8})$. Trader 1's partition is $\{\{\omega_1, \omega_3\}, \{\omega_2, \omega_4\}\}$ and Trader 2's is $\{\{\omega_1, \omega_4\}, \{\omega_2, \omega_3\}\}$. Security $X$ is non-separable at $\mu_0$ because the expected value of $X$ for both traders is $v=\frac{3}{2}$ at all states, yet there is uncertainty about the value of the security. Traders can acquire signals where the cost is proportional to the expected reduction in entropy relative to the fixed posterior $\mu=0.5$ for every $\omega$  \citep{shannon1948mathematical}. 
We assume that traders are myopic and proper scoring rules are used in both settings, hence each announcement is the expected value of the security given the acquired information.

Figure~\ref{Fig: MarketV} shows the expected accuracy of markets and polls for prior $\mu$. When the marginal cost of information is high (c>4.5), no Trader acquires any information and the accuracy of the market is equal to that of the poll. As the marginal cost decreases below $4.5$, Trader 2 starts acquiring information if the state is either 1 or 4. This implies that his announcement differs across partitions,  revealing whether event $\{\omega_1, \omega_4\}$ or $\{\omega_2, \omega_3\}$ is true. Trader 1 combines this with his private information and learns which state is true, thus announcing $v=X(\omega)$. Therefore, a small change in the marginal cost of information allows the market to  aggregate information, with a prediction accuracy of 1. In contrast, the poll's accuracy improves gradually as information gets cheaper. This means that, as the cost of information decreases, the prediction accuracy of the market suddenly jumps to 1, whereas the accuracy of the poll gradually increases. 
 
\begin{figure}[h!] 
	\centering
		\includegraphics[width=1\textwidth]{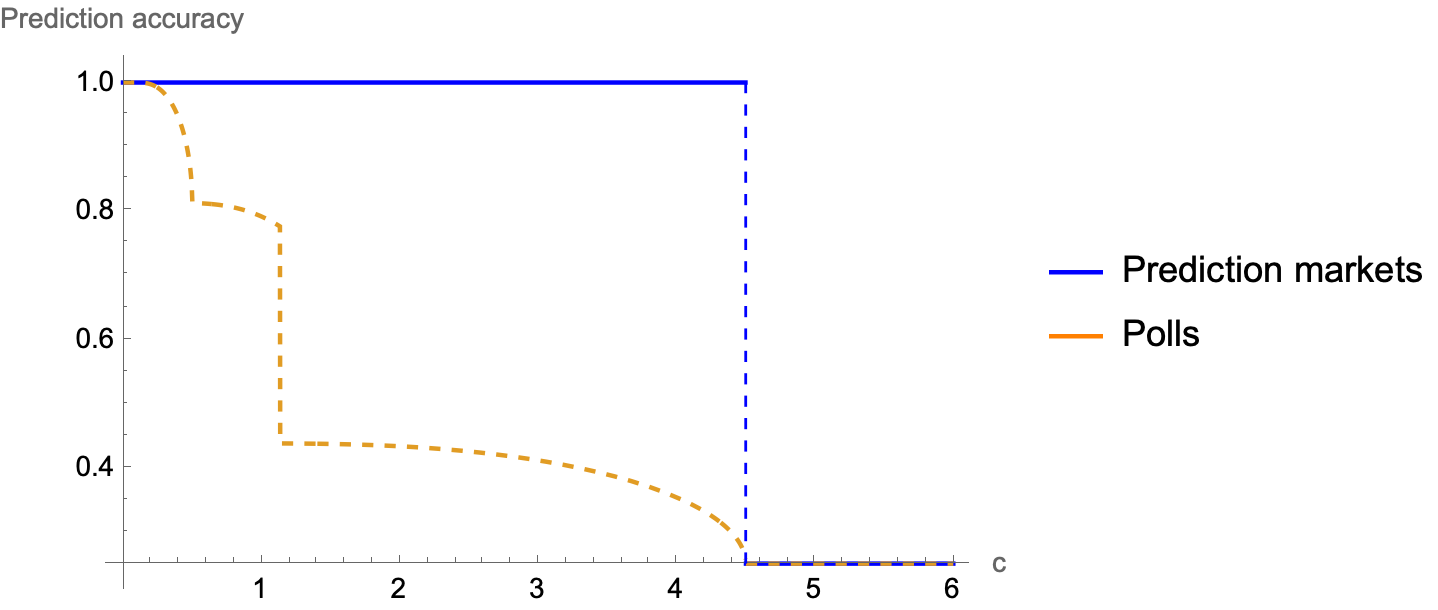}
\caption{Prediction accuracy ($A(\mu) = E_\mu A(\omega,\mu)$) for markets and polls.}\label{Fig: MarketV}
\end{figure}

%


We now provide the formal definitions of a poll and of prediction accuracy. We then state Proposition \ref{prop: char k separable and poll} which characterizes $\kappa$ separable securities in terms of the market being strictly more accurate than a poll for all priors. Recall that the full state $\phi = (\omega, \iota_1, \iota_2, \ldots)$  describes the initial uncertainty $\omega \in \Omega$ and the randomizations of the players, as well as the signal realisations. 

\begin{Definition}
In a poll there is only one round of announcements and traders are myopic. For each $\phi \in \Phi$ and common prior $\mu$, the prediction of the poll is the average of the myopic predictions, $y_i$, where each Trader $i$ optimally obtains a signal ${\cal R}_i$ and then they all announce simultaneously:
$$p^p(\phi,\mu) = \frac{\underset{i=1}{\overset{n}{\sum}} y_i}{n}.$$
The accuracy of the poll at $\phi$ and $\mu$ is $A^p(\phi,\mu) = 1-|p^p(\phi,\mu)- X(\phi)|$.
\end{Definition}



The prediction of the market depends on the equilibrium that is played. Let ${\cal L}(\mu)$ be the set of Nash equilibria given $\mu$. We know that in every Nash equilibrium $l \in {\cal L}$ and state $\phi$, the announcements converge to some price $p^m(\phi,\mu, l)$, which we denote as the prediction of the market. We define the accuracy of the market given $\phi$ and $\mu$ as $A^m(\phi,\mu) = \underset{l \in {\cal L}(\mu)}{\inf} \{1-|p^m(\phi,\mu,l)- X(\phi)|\}$, the worst accuracy among all equilibria. The highest possible accuracy is 1, when $p^m(\phi,\mu, l)= X(\phi)$ at all Nash equilibria. The expected accuracy of the market given $\mu$ is $A^m(\mu) = E_\mu A^m(\phi,\mu)$, whereas for the poll it is  $A^p(\mu) = E_\mu A^p(\phi,\mu)$.





For Proposition \ref{prop: char k separable and poll}, we assume that if the support of ${\cal R}(\cdot |\omega)$ is not the same for all $\omega \in \Omega$, then the cost of ${\cal R}$ is infinite. This implies that it is impossible for a Trader to exclude a state $\omega$ from being possible, just by observing a signal realization. This assumption makes the analysis non-trivial; otherwise, traders would be able to exclude states from being possible by acquiring signals with sufficiently low $c$, so obtaining information from trading behavior would be irrelevant.


\begin{Assumption} \label{ass: infinite cost} 
If $Supp({\cal R}(\cdot |\omega)) \neq Supp({\cal R}(\cdot |\omega'))$ for some $\omega, \omega' \in \Omega$, then $K({\cal R}) = +\infty$.
\end{Assumption}

Recall that a non-degenerate prior $\mu$ given $X$ does not assign probability 1 to a unique value of security $X$. If the security is $\kappa = (K,c)$ separable, then information gets aggregated for all  marginal costs $c' \leq c$, which is not true for polls. We, therefore, have the following remark for each $\Pi \in {\cal P}$ and $\kappa = (K,c)$.

\begin{Remark}\label{remark:marketv}
Under Assumption \ref{ass: infinite cost}, if security $X$ is $\kappa$ separable, then for any non-degenerate $\mu$ given $X$ and for all $0<c'\leq c$, information gets aggregated by the market with cost $\kappa' = (K, c')$ at all Nash equilibria, so that $A^m(\mu) = 1$, but it is not aggregated by the poll,  so that $A^p(\mu) < 1$.
\end{Remark}

To see this, note first that $X$ is $\kappa'$ separable, as shown in Remark \ref{remark: lower cost kappa separable}, hence the first part follows directly from  Theorem~\ref{thm:info aggregation}, which shows that information gets aggregated in all Nash equilibria.  Assumption \ref{ass: infinite cost}
implies that individual traders cannot buy a signal that reveals the true state with probability 1 in the first round. Because $\mu$ is non-degenerate given $X$, we have that $p^p \neq X(\phi)$. 

We can further generalise this result, showing that $\kappa$ separability is equivalent to the market being strictly more accurate than the poll, for all non-degenerate priors given $X$.

\begin{theoremEnd}{proposition}\label{prop: char k separable and poll}
Fix cost structure $\kappa$. Under Assumptions \ref{ass: costly sequence of signals} and \ref{ass: infinite cost}, security $X$ is $\kappa$ separable given $\Pi \in {\cal P}$ if and only if $A^m(\mu,l) > A^p(\mu)$ for all non-degenerate priors $\mu$ given $X$.
\end{theoremEnd}

\begin{proofEnd}
If $X$ is $\kappa$ separable, then from Theorem~\ref{thm:info aggregation} we have that information aggregates at all states and all Nash equilibria, therefore  $A^m(\mu)=1$. Assumption \ref{ass: infinite cost} implies that no Trader would acquire full information in one round, hence $p^p(\phi, \mu) \neq X(\phi)$ for all $\phi \in \Phi$ and $A^p(\mu)<1$.


For the converse, suppose that $X$ is $\kappa$ non-separable. Then, we can find non-degenerate $\mu$ given $X$ for which there is no information acquisition and $X$ is non-separable at $\mu$. Using the proof of the second part of Theorem \ref{thm:info aggregation}, which requires Assumption \ref{ass: costly sequence of signals}, there is a Nash equilibrium where everyone agrees on the announcement at all states in the support of $\mu$ and the game ends in the first round. The announcement is the same for everyone, so the poll gives the same prediction as the market, contradicting that $A^m(\mu) > A^p(\mu)$.
\end{proofEnd}

To interpret this result, let $\Delta_X$ be the set of all non-degenerate priors given $X$. Suppose we define the value of the market (with security $X$) to be $V(X)= \underset{\mu \in \Delta_X}{\min} [A^m(\mu) - A^p(\mu)]$, the minimum improvement in accuracy by switching from the poll to the market. Then, Proposition \ref{prop: char k separable and poll} implies that $X$ is $\kappa$ separable if and only if $V(X)>0$.

\section{Concluding Remarks}
\label{conclusion}

The paper provides a thorough examination of the interplay between information aggregation and information acquisition. We show that $\kappa$ separability characterizes information aggregation when the cost of information acquisition is $\kappa$ and we group securities into three classes. First, the `always separable' securities aggregate information irrespective of who trades or what is their information structure. Second, the securities which are $\kappa$ separable for some $\kappa$ and all information structures, aggregate information if the cost is sufficiently low. 
Finally, there is a small class of securities such that for any $\kappa$, each is $\kappa$ non-separable for some information structure. This means that even if the cost is very close to zero, information may not aggregate. Surprisingly, these three classes are easily distinguishable just by looking at the payoff structure of each security.

An interesting question is whether information aggregates if the security is non-separable but the prior is generic. If a security is non-separable, then there is a common prior where information does not aggregate. However, if we were able to slightly perturb this prior then information could aggregate, because the expected value of the security would not be the same across all agents and all partition cells, hence an `announcement' would reveal public information.   
\cite{ostrovsky12} shows that information aggregates in all pure-strategy equilibria with a generic prior, even for non-separable securities. However, in his setting, it is an open question what happens with mixed-strategy equilibria. Intuitively, even if we perturb the initial prior, and given that agents have infinite action spaces, it is not clear whether they will converge to some belief at which all announcements are equal yet there is some uncertainty about the security. It is therefore `more potent' to be able to perturb the belief at the interim stage, when all agents are stuck at making the same announcement, rather than at the beginning of the game.

The current paper provides a novel perspective on this issue of genericity, by endogenizing the perturbation of beliefs {\it at the interim stage}. If information is not aggregated at time $t$ and all traders agree on the price, a Trader could buy an additional signal, if the cost is not too high, and profit from changing his belief. By examining what happens when the cost is very low, we effectively allow for arbitrarily small perturbations of beliefs to be feasible for traders. A ``generic'' security pays differently across states and Proposition \ref{prop:different values} shows that if the cost is sufficiently low, such a security is $\kappa$ separable, for all information structures. Therefore, using Theorem \ref{thm:info aggregation} we can say that generically (for almost all securities) information gets aggregated, in all mixed-strategy equilibria, thus providing an answer to the open question of \cite{ostrovsky12}.


\newpage
\appendix

\section{Existence of a Perfect Bayesian Equilibrium}
\label{existence}


Perfect Bayesian equilibria do not always exist in games with an infinite horizon and infinite actions. In this section, we discretize the action space and show that, because the game is continuous at infinity,  a Perfect Bayesian equilibrium always exists. Continuity at infinity is achieved by shortening the time period, $t_k$, as $k \rightarrow \infty$, so that the discount factor decreases. This is similar to the approach of \cite{ostrovsky12} and \cite{galanisEtAl24}.

We first establish that a Perfect Bayesian equilibrium always exists in games with finitely many actions and periods. Then, by adapting the proofs of \cite{fudenbergLevine83, fudenbergLevine86}, we approximate the game of infinitely many periods with a sequence of  games with finitely many periods and show that the sequence of Perfect Bayesian equilibria in the finite games converges to a Perfect Bayesian equilibrium in the infinite game. Finally, we show that the announcements converge to a limit which is as close as possible to the true value of $X$, given that only finitely many announcements are available.

\subsection{Games with Finitely Many Actions}


Let $ {\cal Y} \sq [\underline{y}, \overline{y}]$ be a set of finitely many announcements, evenly distributed on $[\underline{y}, \overline{y}]$, and ${\cal E}$ a set of finitely many signals. Each signal ${\cal R} \in {\cal E}$ has finitely many realizations $T_{\cal R}$. We denote a game with $k < \infty$ periods and finitely many actions as $\Gamma^k$. We denote a game with infinitely many periods and finitely many actions as $\Gamma^\infty$. Recall that a game with infinitely many actions and periods is denoted by $\Gamma$. A game with finitely many actions ($\Gamma^k$ or $\Gamma^\infty$) is a tuple  $(\Omega, \Pi, X, \mu, {\cal E}, c, K, y_0, {\cal Y}, s, \beta)$ where $\Omega$ is the state space, $\Pi$ is the collection of partitions, $X$ is the traded security, $\mu$ is the common prior, ${\cal E}$ is the finite set of signals, $c$ is the marginal cost and $K$ is the cost function of signals,  $y_{0}$ is the market maker's initial announcement at time $t_0$, ${\cal Y}$ is the finite set of possible announcements, $s$ is a strictly proper scoring rule, and $\beta$ is the common discount rate.

Recall that other traders are not informed about which signal is chosen by Trader $i$, or of the specific signal realization. Therefore, an information set ${\cal I}_k$ for the Trader who announces at $t_k$ only distinguishes between decision nodes that specify a different history of announcements, or a different history of his own signal realizations.  Let ${\mathscr I}$ denote the collection of all information sets. 

An assessment ${\cal A} = \{\sigma, {\mathscr B} \}$ is a strategy profile $\sigma \in \Sigma$ and a system of beliefs ${\mathscr B}=\{{\mathcal B}({\cal I})\}_{{\cal I} \in {\mathscr I}}$, where ${\mathscr I}$ is the collection of all information sets ${\cal I}$ and ${\mathcal B}({\cal I}) \in \Delta (\Phi)$ is the belief at ${\cal I}$. Assessment ${\cal A} = \{\sigma, {\mathscr B} \}$ is consistent if the system of beliefs is generated by Bayesian updating, whenever possible, given the common prior $\mu$ and the strategy $\sigma$.

\begin{Definition}\label{continuation payoff}
The continuation payoff of player $a_k = i$ at time $t_k$ and state $\phi$, given strategy profile $\sigma$, information set ${\cal I}_k$, and system of beliefs ${{\mathscr B}}$, is 

\[V_i({\cal I}_k,\phi,\sigma, {{\mathscr B}})= \]
\[E_{{\mathcal B}({\cal I}_k)}\Bigg[\sum_{m=0}^{\infty}\beta^{nm}\bigg(s \Big (y_{k+nm} (\sigma, \phi | {\cal I}_k), X(\phi) \Big) - s \Big (y_{k+nm-1} (\sigma, \phi | {\cal I}_k), X(\phi) \Big) \bigg) -cK \Big ({\cal R}_{k+nm}(\sigma,\phi | {\cal I}_k) \Big) \bigg) \Bigg].\]
\end{Definition}

We fix the state $\phi$ for the remainder of the analysis and omit it to simplify the notation.


\begin{Definition}\label{Perfect Bayesian eq def}
Consistent assessment $(\sigma^*, {\mathscr B})$ is  a Perfect Bayesian equilibrium if there is no information set ${\cal I}_k$, player $a_k =i$ and alternative strategy $\sigma = (\sigma_i, \sigma^*_{-i})$, 
such that
\[V_i({\cal I}_k, \sigma, {\mathscr B})  > V_i({\cal I}_k, \sigma^*, {\mathscr B}).\] 
\end{Definition}

The existence of a Perfect Bayesian equilibrium in every game $\Gamma^{k}$ with finitely many actions and periods follows from standard results, as it is weaker than a sequential equilibrium which always exists (\cite{krepsWilson1982}). 

%
%


%

\subsection{Truncated Games with Finitely Many Periods}

For each game $\Gamma^\infty$ with infinitely many periods and finitely many actions, we will generate a sequence of truncated games with finitely many periods. Following \cite{fudenbergLevine83, fudenbergLevine86}, we assume, without loss of generality, that there is a ``do nothing'' or null action at each time $t$, denoted $0$. The $0$ action means that the Trader repeats, with probability 1, the previous announcement, and buys no signals. Note that the null action guarantees that the agent's payoff at that period is 0, irrespective of the previous announcement. With each game $\Gamma^\infty$, we associate a collection of truncated games. A truncated game $\Gamma(m)$  effectively ends in time $t_m$, because for all $t_k>t_m$, the only available action for any Trader is the null action of repeating the previous announcement and their beliefs no longer update.  Hence, the announcement in $t_m$ is repeated by everyone. Note that a game $\Gamma^m$ ends at $t_m$, whereas game $\Gamma(m)$ has infinitely many periods but traders can only choose the null action after $t_m$. 

More formally, let $\Sigma(m)$ be the strategy space of truncated game $\Gamma(m)$. Each strategy profile $\sigma \in \Sigma(m)$ is of the form $(\sigma_1, \sigma_2, \ldots, \sigma_m, 0, 0, \ldots)$, where $\sigma_k$ maps information sets at time $t_k$ to  the set of possible (finite) signals and announcements ${\cal Y}$, for the Trader who makes an announcement at $t_k$. Note that $\Sigma(1) \sq \Sigma(2) \sq \ldots \sq \Sigma(\infty)$, where $\Sigma(\infty) = \Sigma$ is the strategy space of the infinite horizon game. Moreover, each $\Sigma(m)$ is compact.

Let ${\cal A}(m)$ be the collection of assessments ${\cal A} = (h, {\mathscr B})$ such that $h \in \Sigma(m)$ and
beliefs do not update after $m$, so that if  $k>m$ and information set ${\cal I}_{k-1}$ immediately precedes ${\cal I}_k$, then ${\mathcal B}({\cal I}_k) = {\mathcal B}({\cal I}_{k-1})$. Let ${\mathscr B}(m)$ be the collection of all systems of beliefs ${\mathscr B}$ of game $\Gamma(m)$ and note that ${\mathscr B}(m)$ is compact.

An assessment ${\cal A} = (h, {\mathscr B})$ is an equilibrium in $\Gamma(m)$  only if ${\cal A} = (h, {\mathscr B}) \in {\cal A}(m)$. Note that ${\cal A}(1) \sq {\cal A}(2) \sq \ldots \sq {\cal A}(\infty)$, where ${\cal A}(\infty)$ is the collection of all assessments in the infinite horizon game $\Gamma^\infty$. Intuitively, after $t_m$ no Trader changes her action, hence there is no information revelation and beliefs do not update. Set ${\cal A}(\infty)$ is compact because it is a closed subset of $\times_{m=1}^{\infty}(\Sigma(m) \times {\mathscr B}(m))$, which is the product of compact sets in the product topology, and therefore also compact.



\subsection{Continuity}

In order to approximate an equilibrium in the infinite game $\Gamma^\infty$ we need to define the distance between assessments. Recall that $\Phi$ is the space of all uncertainty and let $\delta(p,q) = \underset{E \in \Phi}{\sup}|p(E)-q(E)|$ be the distance between two beliefs $p,q \in \Delta(\Phi)$.


We define the distance between two systems of beliefs ${\mathscr B},{\mathscr B}'$ as
\[d({\mathscr B},{\mathscr B}') = \underset{{\cal I}_k \in {\mathscr I}}{\sup} \left \{\frac{1}{k}  \delta \left ({\mathcal B}({\cal I}_k),{\mathcal B}'({\cal I}_k) \right) \right \}.\]

This metric is motivated in \cite{fudenbergLevine83}. It specifies that two systems of beliefs are close to each other if they only differ in the distant future. We can similarly measure the distance between two 
strategies  $\sigma, \sigma'$ with $d(\sigma, \sigma') = \underset{{\cal I}_k \in {\mathscr I}}{\sup} \left \{\frac{1}{k} \delta \left (\sigma({{\cal I}_k}),\sigma'({{\cal I}_k}) \right ) \right \}$, where $\sigma({{\cal I}_k}) \in \Delta \left ( \underset{{\cal R} \in {\cal E}}{\bigcup}{\cal Y}^{T_{\cal R}} \right )$ is the strategy of the Trader who announces at $t_k$ and information set ${\cal I}_k$. Finally, we can also measure the distance between truncated systems of beliefs (and strategies) that occur after an information set. 
Given an information set ${\cal I}_k$, let ${\cal I}_k({\mathscr B})$ be the restriction of ${\mathscr B}$ to all information sets that succeed ${\cal I}_k$, including ${\cal I}_k$. The same notation ${\cal I}_k(\sigma)$ applies to a strategy $\sigma$. Then, the distances $d({\cal I}_k({\mathscr B}), {\cal I}_k({\mathscr B}'))$  and $d({\cal I}_k(\sigma), {\cal I}_k(\sigma'))$ calculate the distance only with respect to information sets that follow ${\cal I}_k$.

We can now define the  distance between two assessments ${\cal A} = \{\sigma, {\mathscr B} \}, {\cal A}' = \{\sigma', {\mathscr B}' \}$:

\[d({\cal A}, {\cal A}') \equiv \underset{{\cal I}_k \in {\mathscr I}}{\sup} \left \{d({\cal I}_k(\sigma), {\cal I}_k(\sigma')), \left \{ \underset{h_{a_k} \in \Sigma_{a_k}}{\sup} d({\cal I}_k(h_{a_k}, \sigma_{-{a_k}}), {\cal I}_k(h_{a_k}, \sigma'_{-{a_k}})) \right \}, d({\cal I}_k({\mathscr B}), {\cal I}_k({\mathscr B}')) \right \},\]
where $a_k$ is the announcer at time $t_k$ and information set ${\cal I}_k$. \


This metric is also motivated in \cite{fudenbergLevine83}. Two assessments ${\cal A}$ and ${\cal A}'$ are close to each other if the following three conditions are true at each information set ${\cal I}_k$. First, the distributions over actions that are generated by $\sigma$ and $\sigma'$ for every subsequent information set  are close to each other. Second, the distributions over actions are close to each other even when Trader $a_k$ deviates with $h_{a_{k}}$  at ${\cal I}_k$. Finally, the sequence of sets of beliefs that are generated from $ {\mathscr B},  {\mathscr B}'$ given ${\cal I}_k$ are also close to each other. 

%
%
%
%
%

Given an assessment  ${\cal A} = \{\sigma, {\mathscr B} \}$, let $V_i ({\cal I}, \sigma, {\mathscr B})$ be \is continuation payoff at information set ${\cal I}$. We say that the game is uniformly continuous if whenever two assessments are close to each other, the continuation payoffs are also close, for each information set.

\begin{Definition}
The game $\Gamma^\infty$  is uniformly continuous if for all information sets ${\cal I}$ and all sequences of assessments $\{{\cal A}^n\} = \{\sigma^n, {\mathscr B}^n\}$, $\{{\cal A}'^n\}=\{\sigma'^n, {\mathscr B}'^n \}$, ${\cal A}^n \rightarrow {\cal A}'^n$ implies $\big | V_i ({\cal I}, \sigma^n, {\mathscr B}^n) - V_i ({\cal I}, \sigma'^n, {\mathscr B}'^n) \big | \rightarrow 0$ for all $i \in I$. 
\end{Definition}

\cite{ostrovsky12} generates a uniformly continuous game by shortening the time period $t_m$ as $m \rightarrow \infty$. For example, in his version of the \cite{kyle85} model, he sets $t_k = 1-\frac{1}{2^k}$. Although he does not specify how he achieves uniform continuity in the prediction market model, we use something similar. Following \cite{dimitrovSami08}, we assume that the payment at $t_m$ is $\beta^m (s(y_{t_m},x^*)-s(y_{t_{m-1}},x^*))$, instead of $s(y_{t_m},x^*)-s(y_{t_{m-1}},x^*)$.  This is equivalent to shortening the  period $t_m$ as $m \rightarrow \infty$, so that the discounting factor needs to decrease accordingly. Although the results in the main text are true with both definitions, the former ensures that the game is uniformly continuous. We refer to this uniformly continuous game with finitely many announcements as $\Gamma^\infty$. 

Let ${\mathscr B}(\infty)$ be the collection of all systems of beliefs. Let constant $w^m$ be the greatest variation in any agent's payoff and for any system of beliefs, from strategies $\sigma, \tau$ that are identical for all information sets up to time $t_{m-1}$, written as $\sigma =_{m-1} \tau$.
\[w^m \equiv \underset{{\mathscr B} \in {\mathscr B}(\infty)}{\underset{{\cal I} \in {\mathscr I}}{\sup}} \; \underset{\underset{\sigma =_{m-1} \tau}{i \in I}}{\sup}
  \bigg | V_i ({\cal I}, \sigma, {\mathscr B}) - V_i ({\cal I}, \tau, {\mathscr B}) \bigg |.\]
  
\begin{Definition}
The game $\Gamma^\infty$ is continuous at infinity if $w^m \rightarrow 0$ as $m \rightarrow \infty$.
\end{Definition}

\noi It is straightforward that because $\Gamma^\infty$  is uniformly continuous, it is also continuous at infinity.

\subsection{Existence of Equilibrium in $\Gamma^\infty$}

Following \cite{fudenbergLevine83, fudenbergLevine86}, we prove existence in the infinite game $\Gamma^\infty$ through a series of lemmas. We first define the notion of an $\epsilon$-Perfect Bayesian equilibrium.


\begin{Definition}\label{e Perfect Bayesian eq def}
Consistent assessment $(\sigma^*, {\mathscr B})$ is  an $\epsilon$-Perfect Bayesian equilibrium if there is no information set ${\cal I}_k$, player $a_k =i$ and alternative strategy $\sigma = (\sigma_i, \sigma^*_{-i})$, 
such that
\[V_i({\cal I}_k, \sigma, {\mathscr B})  > V_i({\cal I}_k, \sigma^*, {\mathscr B}) + \epsilon.\] 
\end{Definition}

\begin{Lemma} 
\label{lemma 2.1}
If $(h^*, {\mathscr B}) \in {\cal A}(m)$ is an $\epsilon$-Perfect Bayesian equilibrium in $\Gamma(m)$, then $(h^*, {\mathscr B})$ is an $(\epsilon+w^m)$-Perfect Bayesian equilibrium in $\Gamma^\infty$.
\end{Lemma}

\begin{proof} 
 Suppose $(h^*, {\mathscr B})$ is an $\epsilon$-Perfect Bayesian equilibrium in $\Gamma(m)$  and let $g \in \Sigma(\infty)$. Set $h = (g_1, g_2, \ldots, g_m, 0, \ldots)$. 
 For any information set ${\cal I} \in {\mathscr I}$ where agent $i$ makes an announcement, we have
\[V_i ({\cal I}, h_i, h^*_{-i}, {\mathscr B}) - V_i ({\cal I}, h^*, {\mathscr B}) \leq \epsilon. \]
Because $h$ and $g$ differ only after $t_m$, we have
\[V_i ({\cal I}, g_i, h^*_{-i}, {\mathscr B}) - V_i ({\cal I}, h_i, h^*_{-i}, {\mathscr B}) \leq w^m. \]
Adding the two inequalities we have
\[V_i ({\cal I}, g_i, h^*_{-i}, {\mathscr B}) - V_i ({\cal I}, h^*, {\mathscr B}) \leq  \epsilon + w^m. \]



For any information set ${\cal I}_k$ where $t_k \leq t_m$, the consistency on beliefs is satisfied because $(h^*, {\mathscr B})$ is an $\epsilon$-Perfect Bayesian equilibrium in $\Gamma(m)$. For any other information set, the consistency on beliefs is also satisfied because everyone chooses the null action of repeating the previous announcement, hence there is no updating of information or beliefs. Because $g$ was arbitrary, $(h^*, {\mathscr B})$ is an $(\epsilon+w^m)$-Perfect Bayesian equilibrium in $\Gamma^\infty$.

\end{proof}

\begin{Lemma}
\label{lemma 3.2}
Consider a sequence of assessments $\{{\cal A}_m\} = \{g_m, {\mathscr B}_m\}$ such that each ${\cal A}_m$ is an $\epsilon$-Perfect Bayesian equilibrium in $\Gamma^\infty$ and ${\cal A}_m \rightarrow {\cal A} = \{g, {\mathscr B}\}$. Then, ${\cal A}$ is an $\epsilon$-Perfect Bayesian equilibrium in $\Gamma^\infty$.
\end{Lemma}

\begin{proof}
First, note that since ${\cal A}_m \rightarrow {\cal A} = (g, {\mathscr B})$, the corresponding sets of beliefs ${{\mathscr B}}_m$ converge to ${{\mathscr B}}$, hence the consistency condition on beliefs is satisfied. Suppose there is an information set ${\cal I}$ and a strategy $h_i$ such that 
\[V_i ({\cal I}, h_i, g_{-i}, {\mathscr B}) - V_i ({\cal I}, g, {\mathscr B}) \geq \epsilon + 3 \delta. \]

Because ${\cal A}_m \rightarrow {\cal A}$ and the game is uniformly continuous, for any $\delta$, there exists large $m$ such that 
\[V_i ({\cal I}, g_m, {\mathscr B}_m) - V_i ({\cal I}, g, {\mathscr B}) < \delta, \]
\[V_i ({\cal I}, h_i, g_{-i}, {\mathscr B}) - V_i ({\cal I}, h_i, {g_m}_{-i}, {\mathscr B}_m) < \delta. \]

Combining the three inequalities, we have

\[V_i ({\cal I}, h_i, g_{-i}, {\mathscr B}) - V_i ({\cal I}, g, {\mathscr B}) \geq \epsilon + 3 \delta > \epsilon + \delta + V_i ({\cal I}, g_m, {\mathscr B}_m) - V_i ({\cal I}, g, {\mathscr B})  +  \]
\[V_i ({\cal I}, h_i, g_{-i}, {\mathscr B}) - V_i ({\cal I}, h_i, {g_m}_{-i}, {\mathscr B}_m), \]
\noindent which implies

\[V_i ({\cal I}, h_i, {g_m}_{-i}, {\mathscr B}_m)   -  V_i ({\cal I}, g_m, {\mathscr B}_m)> \epsilon + \delta. \]

As $\delta$ can be taken to be arbitrarily small, we can find big enough $m$ so that 
\[V_i ({\cal I}, h_i, {g_m}_{-i}, {\mathscr B}_m)   -  V_i ({\cal I}, g_m, {\mathscr B}_m)> \epsilon, \]
which contradicts that ${\cal A}_m$ is an $\epsilon$-Perfect Bayesian equilibrium.
\end{proof}

\begin{Lemma}
\label{limit theorem}
Suppose that there is a sequence $\{{\cal A}_m\} = \{g_m, {\mathscr B}_m\}$  such that ${\cal A}_m$ is an $\epsilon_m$-Perfect Bayesian  equilibrium in $\Gamma(m)$ and, as $m \rightarrow \infty$, we have $\epsilon_m \rightarrow 0$ and ${\cal A}_m \rightarrow {\cal A}^* = (g^*, {\mathscr B}^*)$. Then, ${\cal A}^*$ is a Perfect Bayesian  equilibrium in $\Gamma^\infty$.
\end{Lemma}

\begin{proof}

From Lemma \ref{lemma 2.1}, ${\cal A}_m$ is an $(\epsilon_m + w^m)$-Perfect Bayesian equilibrium in the infinite game $\Gamma^\infty$. Because $\Gamma^\infty$ is uniformly continuous, it is also continuous at infinity and, together with $\epsilon_m \rightarrow 0$, we have  $\epsilon_m + w^m \rightarrow 0$. This implies that for each $\delta > 0$ there is $M$ such that $\epsilon_m + w^m < \delta$, whenever $m > M$. From Lemma \ref{lemma 3.2}, ${\cal A}^*$ is a $\delta$-Perfect Bayesian equilibrium in the infinite game $\Gamma^\infty$. Since this is true for every $\delta >0$, ${\cal A}^*$  is a Perfect Bayesian equilibrium in $\Gamma^\infty$.
\end{proof}

\begin{proposition} \label{corollary 4.2}
There is a Perfect Bayesian equilibrium in $\Gamma^\infty$.
\end{proposition}

\begin{proof}

Each finite-horizon $\Gamma^m$ and therefore $\Gamma(m)$  has a Perfect Bayesian equilibrium ${\cal A}_m$, for any $m < \infty$. From Lemma \ref{lemma 2.1}, ${\cal A}_m$ is a $w^m$-Perfect Bayesian equilibrium in $\Gamma^\infty$. Since ${\cal A}(\infty)$  is compact, 
there is a subsequence ${\cal A}_k \sq {\cal A}_m$ with ${\cal A}_k \rightarrow {\cal A}^*$. From Lemma \ref{limit theorem}, ${\cal A}^*$ is a Perfect Bayesian equilibrium in $\Gamma^\infty$.
\end{proof}

\subsection{Convergence of Announcements}

When ${\cal Y}$ is finite, the announcements in equilibrium ${\cal A}^*$ do not necessarily converge to the true value $X(\omega)$, for all states, because $X(\omega)$ may not belong to  ${\cal Y}$. We argue here that, nevertheless, they are as close as possible.%
\footnote{Note also that, as shown in Example \ref{main example}, signals with finitely many realizations can generate $\kappa$ separable securities.}

Proposition \ref{corollary 4.2} shows that the subsequence of Perfect Bayesian equilibria ${\cal A}_k = \{g_k, {\mathcal B}_k\}$,  one for each truncated game $\Gamma(k)$, converges to a Perfect Bayesian equilibrium ${\cal A}^* = (g^*, {\mathscr B}^*)$ in $\Gamma^\infty$. Given that each $\Gamma(k)$ has effectively finitely many periods, the last Trader to announce makes the myopic best announcement, given the available information. Moreover, a proper scoring rule is `order-sensitive' so that the further away the forecast is from the true expected value, the lower is the expectation of the score (see footnote \ref{fn: ostrovsky p2618}). This implies that the myopic strategy in the last period is to announce, with probability one, the element in ${\cal Y}$ which is closest to the expected value of $X$. Given that ${\cal A}_k \rightarrow {\cal A}^*$, so that both beliefs and strategies converge to ${\cal A}^*$, we have that the sequence of announcements in ${\cal A}^*$ converge to a limit which is as close as possible to the true expected value of $X$, given the limit of the beliefs in ${\mathscr B}^*$.%
\footnote{The proof of Theorem 1 in \cite{ostrovsky12}, which we use for the proof of our Theorem \ref{thm:info aggregation}, shows that in all equilibria, the updating of beliefs is a martingale, hence through the martingale convergence theorem the beliefs of all traders converge to a limit which incorporates all available information, given the equilibrium strategies.}
Finally, as we increase the number of announcements which are evenly distributed in ${\cal Y}$, the distance between the limit of announcements and the true value $X(\omega)$ decreases.

\section{Proofs}

\printProofs


\pagebreak

\bibliographystyle{plainnat}
\bibliography{biblio}

@article{schlegelEtAl22,
	author = {Schlegel, Jan Christoph and Kwa{\'s}nicki, Mateusz and Mamageishvili, Akaki},
	journal = {Available at SSRN},
	title = {Axioms for Constant Function Market Makers},
	year = {2022}}

@article{bloedelZhong24,
  title={The cost of optimally-acquired information},
  author={Bloedel, Alexander W and Zhong, Weijie},
  journal={Unpublished Manuscript},
  year={2024}
}

@misc{economist21,
	author = {{The Economist}},
	date-modified = {2021-11-13 09:09:35 +0200},
	howpublished = {\url{https://www.economist.com/science-and-technology/2021/04/15/how-spooks-are-turning-to-superforecasting-in-the-cosmic-bazaar}},
	note = {Accessed: 04-17-2021},
	title = {How Spooks are Turning to Superforecasting in the Cosmic Bazaar},
	year = {2021}}

@article{corgnet2022security,
  title={When Do Security Markets Aggregate Dispersed Information?},
  author={Corgnet, Brice and Deck, Cary and DeSantis, Mark and Hampton, Kyle and Kimbrough, Erik O},
  journal={Management Science},
  year={2022},
  publisher={INFORMS}
}

@article{pageSiemroth17,
  title={An experimental analysis of information acquisition in prediction markets},
  author={Page, Lionel and Siemroth, Christoph},
  journal={Games and Economic Behavior},
  volume={101},
  pages={354--378},
  year={2017},
  publisher={Elsevier}
}

@article{page2021much,
  title={How much information is incorporated into financial asset prices? Experimental evidence},
  author={Page, Lionel and Siemroth, Christoph},
  journal={The Review of Financial Studies},
  volume={34},
  number={9},
  pages={4412--4449},
  year={2021},
  publisher={Oxford University Press}
}

@article{CowgillZitzewitz15,
	Author = {Bo Cowgill and Eric Zitzewitz},
	Date-Added = {2019-12-08 00:54:50 +0100},
	Date-Modified = {2019-12-08 01:13:23 +0100},
	Journal = {The Review of Economic Studies},
	Number = {4},
	Pages = {1309-41},
	Title = {Corporate Prediction Markets: Evidence from {G}oogle, {F}ord, and {F}irm {X}},
	Volume = {82},
	Year = {2015}}

@article{demarzoSkiadas98,
	Author = {DeMarzo, Peter and Skiadas, Costis},
	Date-Added = {2019-12-07 00:55:16 +0100},
	Date-Modified = {2020-10-03 14:22:15 +0200},
	Journal = {Journal of Economic Theory},
	Number = {1},
	Pages = {123-52},
	Title = {Aggregation, Determinacy, and Informational Efficiency for a Class of Economies with Asymmetric Information},
	Volume = {80},
	Year = {1998}}

@article{hanson03,
	Author = {Robin Hanson},
	Date-Added = {2019-12-07 00:55:16 +0100},
	Date-Modified = {2019-12-08 01:26:19 +0100},
	Journal = {Information Systems Frontiers},
	Number = {1},
	Pages = {107-19},
	Title = {Combinatorial Information Market Design},
	Volume = {5},
	Year = {2003}}

@article{kyle85,
	Author = {Kyle, Albert S.},
	Date-Added = {2019-12-07 00:55:16 +0100},
	Date-Modified = {2020-10-03 14:46:31 +0200},
	Journal = {Econometrica},
	Number = {6},
	Pages = {1315-35},
	Title = {Continuous Auctions and Insider Trading},
	Volume = {53},
	Year = {1985}}

@unpublished{chenEtAl12,
	Author = {Chen, Yiling and Ruberry, Mike and Vaughan, Jennifer Wortman},
	Date-Added = {2019-12-07 00:55:16 +0100},
	Date-Modified = {2020-10-03 14:19:49 +0200},
	Journal = {arXiv preprint arXiv:1210.4837},
	Note = {Mimeo},
	Title = {Designing Informative Securities},
	Year = {2012}}

@article{atanasovEtAl16,
	Author = {Atanasov, Pavel and Rescober, Phillip and Stone, Eric and Swift, Samuel A. and Servan-Schreiber, Emile and Tetlock, Philip and Ungar, Lyle and Mellers, Barbara},
	Date-Added = {2019-12-07 00:55:16 +0100},
	Date-Modified = {2020-10-05 19:15:53 +0200},
	Journal = {Management Science},
	Number = {3},
	Pages = {691-706},
	Title = {Distilling the Wisdom of Crowds: Prediction Markets vs. Prediction Polls},
	Volume = {63},
	Year = {2017}}

@Article{galanis21,
  author = 	 {Spyros Galanis},
  title = 	 {Dynamic Consistency, Valuable Information and Subjective Beliefs},
  journal = 	 {Economic Theory},
  year = 	 {2021},
  OPTkey = 	 {},
  volume = 	 {71},
  OPTnumber = 	 {},
  pages = 	 {1467-1791}
}

@article{camererEtAl16,
	Author = {Camerer, Colin F and Dreber, Anna and Forsell, Eskil and Ho, Teck-Hua and Huber, J{\"u}rgen and Johannesson, Magnus and Kirchler, Michael and Almenberg, Johan and Altmejd, Adam and Chan, Taizan and others},
	Date-Added = {2019-12-07 00:55:16 +0100},
	Date-Modified = {2019-12-07 00:55:16 +0100},
	Journal = {Science},
	Number = {6280},
	Pages = {1433-1436},
	Publisher = {American Association for the Advancement of Science},
	Title = {Evaluating replicability of laboratory experiments in economics},
	Volume = {351},
	Year = {2016}}

@article{chenEtAl10,
	Author = {Chen, Yiling and Dimitrov, Stanko and Sami, Rahul and Reeves, Daniel M and Pennock, David M. and Hanson, Robin D. and Fortnow, Lance and Gonen, Rica},
	Date-Added = {2019-12-07 00:55:16 +0100},
	Date-Modified = {2020-10-03 14:19:10 +0200},
	Journal = {Algorithmica},
	Number = {4},
	Pages = {930-69},
	Title = {Gaming Prediction Markets: Equilibrium Strategies with a Market Maker},
	Volume = {58},
	Year = {2010}}

@article{galanisKotronis21,
  title = {Updating Awareness and Information Aggregation},
  author = {Spyros Galanis and Stelios Kotronis},
  journal={B.E. Journal of Theoretical Economics},
  OPTvolume={21},
  OPTpages={613-635},
  year={2021}
}

@article{ostrovsky12,
	Abstract = {This paper studies information aggregation in dynamic markets with a finite number of partially informed strategic traders. It shows that for a broad class of securities, information in such markets always gets aggregated. Trading takes place in a bounded time interval, and in every equilibrium, as time approaches the end of the interval, the market price of a \&quot;separable\&quot; security converges in probability to its expected value conditional on the traders' pooled information. If the security is \&quot;non-separable,\&quot; then there exists a common prior over the states of the world and an equilibrium such that information does not get aggregated. The class of separable securities includes, among others, Arrow-Debreu securities, whose value is one in one state of the world and zero in all others, and \&quot;additive\&quot; securities, whose value can be interpreted as the sum of traders' signals.},
	Author = {Michael Ostrovsky},
	Date-Added = {2019-12-07 00:55:16 +0100},
	Date-Modified = {2019-12-07 00:55:16 +0100},
	Journal = {Econometrica},
	Number = {6},
	Pages = {2595-2647},
	Title = {Information Aggregation in Dynamic Markets With Strategic Traders},
	Volume = {80},
	Year = {2012}
}

@article{galanisEtAl24,
  title={Information aggregation under ambiguity: theory and experimental evidence},
  author={Galanis, Spyros and Ioannou, Christos A and Kotronis, Stelios},
  journal={Review of Economic Studies},
  volume={91},
  number={6},
  pages={3423--3467},
  year={2024}
}

@article{hanson07,
	Author = {Robin Hanson},
	Date-Added = {2019-12-07 00:55:16 +0100},
	Date-Modified = {2019-12-07 00:55:16 +0100},
	Journal = {Journal of Prediction Markets},
	Number = {1},
	Pages = {3-15},
	Title = {Logarithmic Market Scoring Rules for Modular Combinatorial Information Aggregation},
	Volume = {1},
	Year = {2007}}

@inproceedings{dimitrovSami08,
	Author = {Dimitrov, Stanko and Sami, Rahul},
	Booktitle = {Proceedings of the 9th ACM Conference on Electronic Commerce},
	Date-Added = {2019-12-07 00:55:16 +0100},
	Date-Modified = {2020-10-03 00:17:14 +0200},
	Pages = {200-9},
	Title = {Non-Myopic Strategies in Prediction Markets},
	Year = {2008}}

@article{grossman76,
	Author = {Grossman, Sanford},
	Date-Added = {2019-12-07 00:55:16 +0100},
	Date-Modified = {2020-10-03 15:01:00 +0200},
	Journal = {The Journal of Finance},
	Number = {2},
	Pages = {573-85},
	Title = {On the Efficiency of Competitive Stock Markets where Trades have Diverse Information},
	Volume = {31},
	Year = {1976}}

@article{demarzoSkiadas99,
	Author = {DeMarzo, Peter and Skiadas, Costis},
	Date-Added = {2019-12-07 00:55:16 +0100},
	Date-Modified = {2020-10-03 14:22:39 +0200},
	Journal = {Economic Theory},
	Number = {1},
	Pages = {1-24},
	Title = {On the Uniqueness of Fully Informative Rational Expectations Equilibria},
	Volume = {13},
	Year = {1999}}

@article{bergEtAl08,
	Author = {Berg, Joyce E. and Nelson, Forrest D. and Rietz, Thomas A.},
	Date-Added = {2019-12-07 00:55:16 +0100},
	Date-Modified = {2019-12-08 01:00:11 +0100},
	Journal = {International Journal of Forecasting},
	Number = {2},
	Pages = {285-300},
	Publisher = {Elsevier},
	Title = {Prediction Market Accuracy in the Long Run},
	Volume = {24},
	Year = {2008}}

@article{wolfersZitzewitz04,
	Author = {Wolfers, Justin and Zitzewitz, Eric},
	Date-Added = {2019-12-07 00:55:16 +0100},
	Date-Modified = {2020-10-03 00:07:15 +0200},
	Journal = {Journal of Economic Perspectives},
	Number = {2},
	Pages = {107-26},
	Title = {Prediction Markets},
	Volume = {18},
	Year = {2004}}

@article{mckelveyPage90,
	Author = {McKelvey, Richard D. and Page, Talbot},
	Date-Added = {2019-12-07 00:55:16 +0100},
	Date-Modified = {2020-10-03 14:44:29 +0200},
	Journal = {Econometrica},
	Number = {6},
	Pages = {1321-39},
	Title = {Public and Private Information: An Experimental Study of Information Pooling},
	Volume = {58},
	Year = {1990}}

@article{shannon1948mathematical,
  title={A mathematical theory of communication},
  author={Shannon, Claude E},
  journal={The Bell system technical journal},
  volume={27},
  number={3},
  pages={379--423},
  year={1948},
  publisher={Nokia Bell Labs}
}

@article{radner79,
	Author = {Radner, Roy},
	Date-Added = {2019-12-07 00:55:16 +0100},
	Date-Modified = {2020-10-03 14:28:21 +0200},
	Journal = {Econometrica},
	Number = {3},
	Pages = {655-78},
	Title = {Rational Expectations Equilibrium: Generic Existence and the Information Revealed by Prices},
	Volume = {47},
	Year = {1979}}

@article{fudenbergLevine83,
	author = {Fudenberg, Drew and Levine, David},
	date-modified = {2021-11-20 01:15:16 +0100},
	journal = {Journal of Economic Theory},
	number = {2},
	pages = {251--68},
	publisher = {Elsevier},
	title = {Subgame-Perfect Equilibria of Finite-and Infinite-Horizon Games},
	volume = {31},
	year = {1983}}

@article{fudenbergLevine86,
	author = {Fudenberg, Drew and David K. Levine},
	date-added = {2022-10-17 21:40:19 +0200},
	date-modified = {2023-02-11 17:29:24 +0100},
	journal = {Journal of Economic Theory},
	number = {2},
	pages = {261-79},
	title = {Limit Games and Limit Equilibria},
	volume = {38},
	year = {1986}}

@article{gal11,
	Author = {Spyros Galanis},
	Date-Added = {2019-12-07 00:55:16 +0100},
	Date-Modified = {2019-12-07 00:55:16 +0100},
	Journal = {Theory and Decision},
	Number = {4},
	Pages = {593-614},
	Title = {Syntactic Foundations for Unawareness of Theorems},
	Volume = {71},
	Year = {2011}}

@article{hayek45,
	Author = {Hayek, Friedrich August},
	Date-Added = {2019-12-07 00:55:16 +0100},
	Date-Modified = {2020-10-03 14:54:49 +0200},
	Journal = {The American Economic Review},
	Pages = {519-30},
	Title = {The Use of Knowledge in Society},
	Year = {1945}}

@article{gal13,
	Author = {Spyros Galanis},
	Date-Added = {2019-12-07 00:55:16 +0100},
	Date-Modified = {2019-12-07 00:55:16 +0100},
	Journal = {Economic Theory},
	Number = {1},
	Pages = {41-73},
	Title = {Unawareness of Theorems},
	Volume = {52},
	Year = {2013}}

@article{meleSangiorgi15,
	Author = {Mele, Antonio and Sangiorgi, Francesco},
	Journal = {The Review of Economic Studies},
	Pages = {1533-1567},
	Publisher = {Oxford University Press},
	Title = {Uncertainty, information acquisition and price swings in asset markets},
	Year = {2015}}

@article{geaPol82,
	Author = {John Geanakoplos and Heraklis Polemarchakis},
	Date-Added = {2019-12-07 00:55:16 +0100},
	Date-Modified = {2020-03-04 16:36:24 +0100},
	Journal = {Journal of Economic Theory},
	Pages = {192-200},
	Title = {We Can't Disagree Forever},
	Volume = {28},
	Year = {1982}}

@article{pavanEtAl22,
  title={({I}n) efficiency in Information Acquisition and Aggregation through Prices},
  author={Pavan, Alessandro and Sundaresan, Savitar and Vives, Xavier},
  journal={Working paper},
  year={2022},
  publisher={CEPR Discussion Paper No. DP17025}
}

@article{farboodi2022has,
  title={Where has all the data gone?},
  author={Farboodi, Maryam and Matray, Adrien and Veldkamp, Laura and Venkateswaran, Venky},
  journal={The Review of Financial Studies},
  volume={35},
  number={7},
  pages={3101--3138},
  year={2022},
  publisher={Oxford University Press}
}

@incollection{snowberg2013prediction,
  title={Prediction markets for economic forecasting},
  author={Snowberg, Erik and Wolfers, Justin and Zitzewitz, Eric},
  booktitle={Handbook of economic forecasting},
  volume={2},
  pages={657--687},
  year={2013},
  publisher={Elsevier}
}

@article{kraemer2006information,
  title={Information aggregation with costly information and random ordering: Experimental evidence},
  author={Kraemer, Carlo and N{\"o}th, Markus and Weber, Martin},
  journal={Journal of Economic Behavior \& Organization},
  volume={59},
  number={3},
  pages={423--432},
  year={2006},
  publisher={Elsevier}
}

@article{lim2011evolution,
  title={The evolution of stock market efficiency over time: A survey of the empirical literature},
  author={Lim, Kian-Ping and Brooks, Robert},
  journal={Journal of Economic Surveys},
  volume={25},
  number={1},
  pages={69--108},
  year={2011},
  publisher={Wiley Online Library}
}

@article{baron2014two,
  title={Two reasons to make aggregated probability forecasts more extreme},
  author={Baron, Jonathan and Mellers, Barbara A and Tetlock, Philip E and Stone, Eric and Ungar, Lyle H},
  journal={Decision Analysis},
  volume={11},
  number={2},
  pages={133--145},
  year={2014},
  publisher={INFORMS}
}

@article{dana2019markets,
  title={Are markets more accurate than polls? The surprising informational value of ``just asking''},
  author={Dana, Jason and Atanasov, Pavel and Tetlock, Philip and Mellers, Barbara},
  journal={Judgment and Decision Making},
  volume={14},
  number={2},
  pages={135--147},
  year={2019},
  publisher={Cambridge University Press}
}

@article{kim2023bloated,
  title={Bloated Disclosures: Can ChatGPT Help Investors Process Financial Information?},
  author={Kim, Alex and Muhn, Maximilian and Nikolaev, Valeri},
  journal={arXiv preprint arXiv:2306.10224},
  year={2023}
}

@article{pelster2023can,
  title={Can ChatGPT assist in picking stocks?},
  author={Pelster, Matthias and Val, Joel},
  journal={Finance Research Letters},
  year={2023},
  publisher={Elsevier}
}

@article{allen81,
  title={Generic existence of completely revealing equilibria for economies with uncertainty when prices convey information},
  author={Allen, Beth},
  journal={Econometrica},
  pages={1173-1199},
  year={1981}
}

@article{laffontMaskin90,
  title={The efficient market hypothesis and insider trading on the stock market},
  author={Laffont, Jean-Jacques and Maskin, Eric S},
  journal={Journal of Political Economy},
  volume={98},
  number={1},
  pages={70--93},
  year={1990},
  publisher={The University of Chicago Press}
}

@article{golubJackson10,
  title={Naive learning in social networks and the wisdom of crowds},
  author={Golub, Benjamin and Jackson, Matthew O},
  journal={American Economic Journal: Microeconomics},
  volume={2},
  number={1},
  pages={112--149},
  year={2010},
  publisher={American Economic Association}
}

@article{degroot74,
  title={Reaching a consensus},
  author={DeGroot, Morris H},
  journal={Journal of the American Statistical Association},
  volume={69},
  number={345},
  pages={118--121},
  year={1974},
  publisher={Taylor \& Francis}
}

@InProceedings{frongilloEtAl24,
  title={An Axiomatic Characterization of {CFMM}s and Equivalence to Prediction Markets},
  author={Frongillo, Rafael and Papireddygari, Maneesha and Waggoner, Bo},
  booktitle={15th Innovations in Theoretical Computer Science Conference (ITCS 2024)},
  pages={51:1-51:21},
  year={2024}
}

@article{krepsWilson1982,
  title={Sequential equilibria},
  author={Kreps, David M and Wilson, Robert},
  journal={Econometrica: Journal of the Econometric Society},
  pages={863--894},
  year={1982}
}

@article{dekelEtAl06,
  title={Topologies on types},
  author={Dekel, Eddie and Fudenberg, Drew and Morris, Stephen},
  journal={Theoretical Economics},
  volume={1},
  number={3},
  pages={275--309},
  year={2006},
  publisher={New York, NY: The Econometric Society}
}


\end{document}